\documentclass[onefignum,onetabnum]{siamsials251208}

\usepackage{lipsum}
\usepackage{amsfonts}
\usepackage{amsopn}
\usepackage{amssymb}
\usepackage{graphicx}
\usepackage{epstopdf}
\usepackage{algorithm}
\usepackage{algpseudocode}
\usepackage{tikz}
\usepackage{multirow}
\ifpdf
  \DeclareGraphicsExtensions{.eps,.pdf,.png,.jpg}
\else
  \DeclareGraphicsExtensions{.eps}
\fi

\usepackage{enumitem}
\setlist[enumerate]{leftmargin=.5in}
\setlist[itemize]{leftmargin=.5in}

\newsiamremark{remark}{Remark}
\newsiamremark{hypothesis}{Hypothesis}
\crefname{hypothesis}{Hypothesis}{Hypotheses}
\newsiamthm{claim}{Claim}
\newsiamremark{fact}{Fact}
\crefname{fact}{Fact}{Facts}

\ifpdf
\hypersetup{
  pdftitle={EDA Decomposition by GMS},
  pdfauthor={X. Chen et al}
}
\fi

\usepackage{xspace}

\newcommand{\vb}{\mathbf{b}}

\newcommand{\ve}{\mathbf{e}}
\newcommand{\vf}{\mathbf{f}}

\newcommand{\vh}{\mathbf{h}}

\newcommand{\vp}{\mathbf{p}}

\newcommand{\vv}{\mathbf{v}}
\newcommand{\vw}{\mathbf{w}}
\newcommand{\vx}{\mathbf{x}}
\newcommand{\vy}{\mathbf{y}}
\newcommand{\vz}{\mathbf{z}}

\newcommand{\mA}{\mathbf{A}}
\newcommand{\mB}{\mathbf{B}}
\newcommand{\mC}{\mathbf{C}}
\newcommand{\mD}{\mathbf{D}}
\newcommand{\mE}{\mathbf{E}}

\newcommand{\mG}{\mathbf{G}}
\newcommand{\mH}{\mathbf{H}}
\newcommand{\mI}{\mathbf{I}}

\newcommand{\mL}{\mathbf{L}}
\newcommand{\mM}{\mathbf{M}}

\newcommand{\mP}{\mathbf{P}}

\newcommand{\mR}{\mathbf{R}}
\newcommand{\mS}{\mathbf{S}}

\newcommand{\mU}{\mathbf{U}}
\newcommand{\mV}{\mathbf{V}}
\newcommand{\mW}{\mathbf{W}}
\newcommand{\mX}{\mathbf{X}}
\newcommand{\mY}{\mathbf{Y}}

\newcommand{\R}{\mathbb{R}}
\newcommand{\Xc}{\mathcal{X}}
\newcommand{\Tc}{\boldsymbol{{\mathcal{T}}}}
\newcommand{\Bc}{\mathcal{BC}}
\newcommand{\Bj}{\mathcal{BJ}}
\newcommand{\OR}{{\rm{OR}}}
\newcommand{\rank}{{\rm{rank}}}

\newcommand{\figpath}{plots/}

\DeclareMathOperator*{\argmin}{argmin}

\DeclareMathOperator{\diag}{diag}

\headers{EDA Decomposition by GMS}{X. Chen et al}

\title{gmsEDA: Decomposition of Electrodermal Activity Signals Using Matrix Separation
\thanks{X. Chen is partially funded by NSF DMS-2307827. X. Chen, D. MacQueen, W. D. Washington, and M. Lammers  are supported by UNCW Brain Health Resilience Hub.
}
}

\author{Xuemei Chen\thanks{Department of Mathematics and Statistics, University of North Carolina Wilmington 
  (\email{chenxuemei@uncw.edu}, \email{lammersm@uncw.edu}).}
\and David MacQueen\thanks{Department of Psychology, University of North Carolina Wilmington
  (\email{macqueend@uncw.edu}, \email{donlinw@uncw.edu}, \email{sec2431@uncw.edu}, \email{ml6796@uncw.edu}).}
\and Wendy Donlin Washington\footnotemark[3]
\and Mark Lammers\footnotemark[2]
\and Owen Deen\thanks{Department of Mathematics, University of Maryland (\email{odeen@umd.edu})}
\and Sean Carey \footnotemark[3]
\and Margot Ledford \footnotemark[3]}

\begin{document}

\maketitle

\begin{abstract}
Electrodermal activity (EDA) signals, which reflect sympathetic nervous system arousal through changes in skin conductance, are widely used in psychological and behavioral research. Decomposing an observed EDA signal into its slowly varying tonic baseline and stimulus-driven phasic component  is an important preprocessing step; however, existing methods process signals in isolation and remain highly sensitive to noise and motion artifacts. 
This work introduces gmsEDA, a new  decomposition method based on generalized matrix separation whose model is designed to cope with noise and motion artifacts. Our method analyzes multiple recordings jointly rather than one at a time, taking advantage of patterns shared across signals to produce more accurate and robust results. Numerical experiments on both simulated and real data  shows that this approach outperforms existing standard tools.
\end{abstract}

%

\begin{keywords}
electrodermal activity, EDA, generalized matrix separation, low rank, sparse
\end{keywords}

\begin{MSCcodes}
65F08, 90C25 , 62J07, 92-04, 92C30
\end{MSCcodes}

\section{Introduction}

Electrodermal activity (EDA), also known as galvanic skin response (GSR), reflects the changes in the skin's electrical properties due to activity of the sweat gland controlled by the body's sympathetic nervous system. The sweat glands are particularly concentrated in the palms and are a crucial part of the sympathetic nervous system. The activation of these glands leads to sweat secretion, which changes the skin's electrical conductance. EDA is a widely used index of autonomic arousal caused by behavioral, cognitive, and emotional processes. EDA measurements are considered to be useful in studying stress and anxiety in comparison to other physiological measurements such as heart rate, respiration rate and skin temperature~\cite{cacioppo2007}.
There has been extensive work on evaluating the association between EDA and stress detection, emotional state~\cite{jaques2015predicting, rahma2022electrodermal}, reaction to video content~\cite{eda_cs}, etc. We refer interested readers to \cite{tronstad2022current} and the references therein for a more extensive read.

More recently, research-grade wearable or mobile devices such as Empatica E4 \cite{E4} and Shimmer3 GSR+ \cite{Shimmer} have been widely adopted in clinical and research settings~\cite{evonko2024opiaid}. Such devices allow the vast expansion of data collection, but can create artifacts in the measurement due to movements. New insights may be gained by the synchrony of physiological measurements between multiple subjects over time~\cite{stuldreher2020physiological}. Van Beers et al. \cite{van2020comparison} compared laboratory and wearable sensors (ECG and EDA) in the context of physiological synchrony, and found no significant difference in classification accuracies between the laboratory and wearable sensors. 

An EDA signal consists of two primary components: a slowly varying \textit{tonic} component and a rapidly varying \textit{phasic} component. The tonic is a baseline for the skin's conductance, while the phasic component relates to the discrete changes as a response to an internal or external stimulus.
An observed EDA signal can be viewed as a superposition of the  tonic component (baseline), the phasic component, and noise (eg. generated by measurement devices):
\begin{equation} \label{equ:model1}
    \vy = \vb + \vp + \ve,
\end{equation}
where $\vy$ is the observed EDA signal, $\vb$ is the baseline, $\vp$ is the phasic component, and $\ve$ is the noise component. 

Decomposition of the EDA signal into the tonic and phasic components is considered as a crucial signal processing step, as well as a challenging task~\cite{BK10, cvxEDA, chaspari2014sparse, eda_cs, veeranki2024comparison, tsirmpas2025transformer}. 
Such preprocessing often improves emotion detection~\cite{zhu2024emotion} and other downstream tasks such as opioid withdrawal detection~\cite{evonko2024opiaid}.
The phasic component is a result of the Skin Conductance Response (SCR) events (such as user excitement events).
In this paper, we will model the phasic component as a linear time-invariant system~\cite{A05, BK10, gerster2018testing} where it is a convolution of the SCR events signal $\vx$ and the impulse response signal $\vh$ as $\vp=\vh*\vx$.
Combined with \eqref{equ:model1}, our signal model looks like
\begin{equation}\label{equ:model}
\vy=\vb+\vh*\vx+\ve,
\end{equation}
where $\vx$ can be considered as the indicator of SCR events, a \textbf{sparse} signal.
This signal model is shown in Figure \ref{fig:decomp_img} without the presence of noise $\ve$.

\begin{figure}[hbt]
    \centering
    \includegraphics[width=0.7\textwidth]{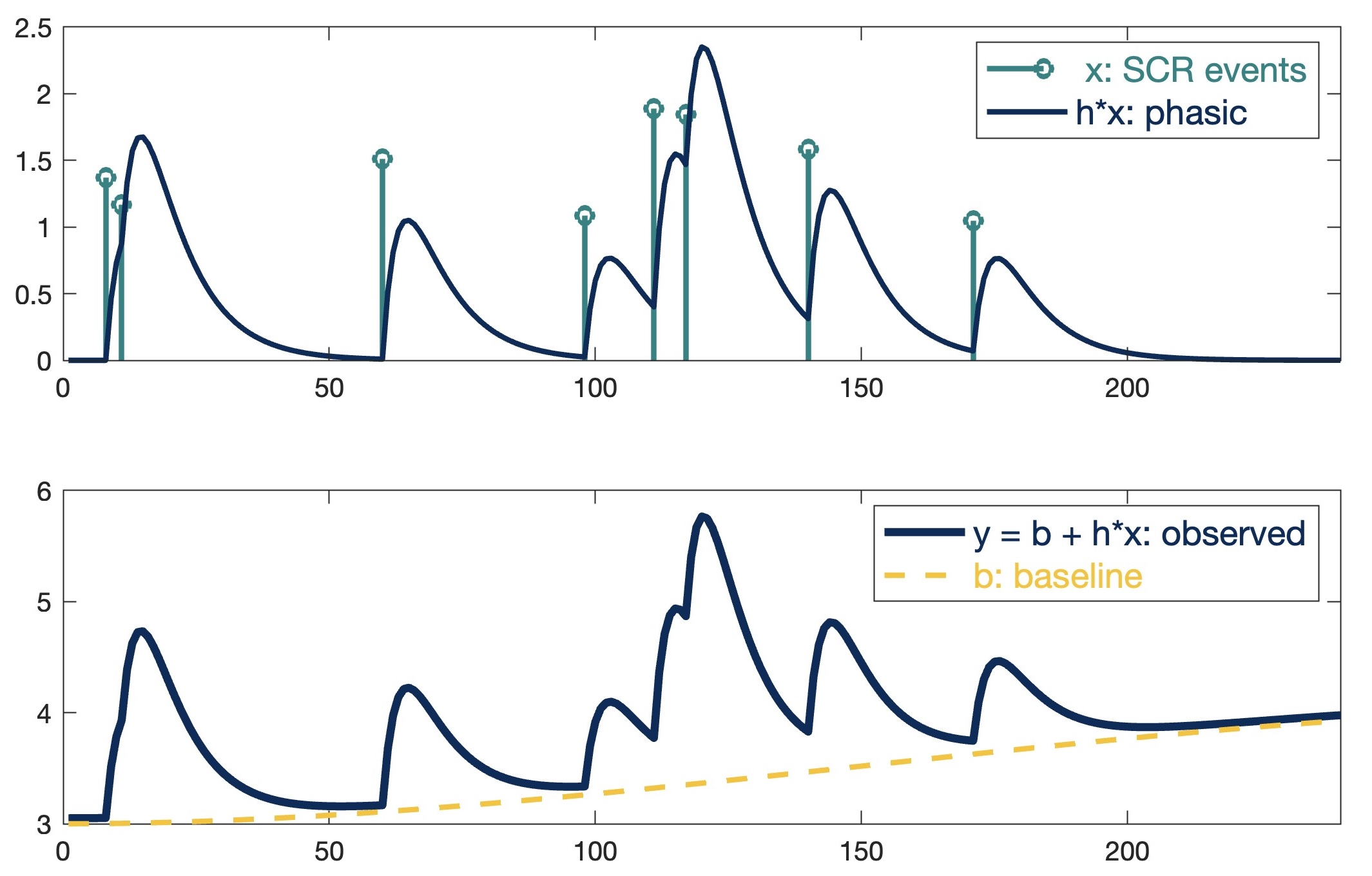}
    \caption{The decomposition of an EDA signal into the baseline and signal events.}
    \label{fig:decomp_img}
\end{figure}

Given the notations above, our  goal of EDA decomposition problem is an inverse problem of  recovering $\vx$ from $\vy$ in \eqref{equ:model} where $\vh$ is known. 
This setup has been widely adopted in the literature such as~\cite{A05, eda_cs, spEDA}. 
Ledalab~\cite{BK10} is an early Matlab package on EDA signal processing, but known to be sensitive to noise and artifacts.
Neurokit2~\cite{neurokit2} is a popular Python toolbox that provides a comprehensive suite of processing routines for a variety of bodily signals including EDA.
Leveraging the success in compressed sensing, recent work has been exploiting sparsity for improved accuracy and computational efficiency.
The work~\cite{chaspari2014sparse}  modeled the SCR signal as a sparse linear combination of atoms of a dictionary. 
Jain et al \cite{eda_cs} proposed a new compressed sensing (CS) framework~\cite{C08, CDD09, CWW14} that is more robust to motion artifact for recovering the SCR components. This work will be discussed in more detail in Section \ref{sec:cs}.
Wavelet transformation was used in \cite{shukla2018efficient} with demonstrated efficiency.
Hernando-Gallego et al. \cite{spEDA} developed sparsEDA, a nonnegative sparse deconvolution method where SCR and baseline are jointly recovered.


While these methods are effective in many scenarios, they typically process signals in isolation. By failing to exploit the shared structure across multiple signals and multiple subjects, these approaches remain sensitive to baseline fluctuations and high noise levels. Such limitations demonstrate the need for more robust frameworks that incorporate additional structural priors beyond simple sparsity.

\subsection{Contributions and Organization}

In this work, we introduce \textbf{gmsEDA}, a novel framework based on Generalized Matrix Separation (GMS)~\cite{CW25} designed to process multiple EDA signals jointly rather than in isolation. The primary contributions of this paper are three-fold: 
\begin{itemize}
\item We move away from traditional single signal processing by stacking concurrent or segmented recordings into a single data matrix. This formulation allows the algorithm to exploit collective low-rank baseline structures and cross-signal event sparsity, significantly enhancing the recovery of underlying physiological components. The modeling can handle noise and motion artifacts better. Our method also has the flexibility to either analyze data on an individual-subject basis in multiple segments or process signals from multiple subjects jointly.

\item Our method is founded on matrix separation theory and theoretical guarantees are provided in Appendix \ref{sec:theory}. 

\item We perform extensive numerical validations on both synthetic benchmarks and real-world datasets. In synthetic trials spanning four distinct signal models, gmsEDA demonstrates superior accuracy and robustness compared to the CS method~\cite{eda_cs}. Furthermore, when validated on real-world physiological data collected from a 23-subject affective study, gmsEDA  outperforms standard CS methods, sparsEDA, and the widely used NeuroKit2 package by achieving higher event match rates and lower false peak rates. 
    
\end{itemize}

The remainder of this paper is organized as follows. Section \ref{sec:sig} outlines the mathematical formulation of the EDA signal models, as well as a review of the CS method. Section \ref{sec:gms} details the proposed gmsEDA framework and its corresponding overlapped reshape procedure. Numerical experiments on synthetic and empirical data are presented in Section \ref{sec:exp}, and Section \ref{sec:dis} concludes the paper with a discussion of implications and future research directions.  Appendix \ref{sec:math} provides mathematical background and is recommended to be read with Section \ref{sec:sig}-\ref{sec:gms}. Appendix \ref{sec:theory} provides theoretical support for gmsEDA and Appendix \ref{sec:exp_detail} details the real data experiment.

\section{Signal Models}\label{sec:sig}


Throughout this paper, we use boldface letters such as $\vx, \vy$ to represent vectors, boldface uppercase letters such as $\mX, \mY$ to represent matrices.

For a vector $\vv=(v_1, v_2, \cdots, v_n)\in\R^n$,
its $\ell_p$ norm is $\|\vv\|_p=(\sum_{i=1}^n|v_i|^p)^{1/p}$ for any $p\geq1$. Its infinity norm is $\|\vv\|_\infty=\max_{i=1}^\infty|v_i|$.
We also use the notation $\|\vv\|_0$ to denote the number of nonzero entries in $\vv$. A vector $\vv$ is called \emph{$s$-sparse} if $\|\vv\|_0\leq s$.
We let $\vv_s$ be the $s$-sparse vector that keeps the $s$ largest coordinates (in magnitude) of $\vv$ while setting all other coordinates to 0.
For a matrix $\mA=(a_{ij})\in\R^{m\times n}$, its nuclear norm, denoted by $\|\mA\|_*$, is the sum of all singular values (see Appendix \ref{sec:math}). The matrix $\ell_1$ norm is $\|\mA\|_1=\sum_{i,j}|a_{ij}|$.
We also use $[n]$ for the index set$\{1,2,\cdots,n\}$.

\subsection{SCR events and Baseline Signals}
The backward difference matrix $\mD$ is
\begin{equation}\label{equ:D}
\mD=\begin{bmatrix}
1&-1&0&\cdots&0\\
0&1&-1&\cdots&0\\
\vdots &\vdots&\ddots&\ddots&\vdots\\
0&0&\cdots&1&-1
\end{bmatrix}\in\R^{(n-1)\times n}.
\end{equation}

The SCR events signal $\vx\in\R^n$ is sparse and we let $s$ denote its maximal number of nonzero entries. 
In practice, $\vx$ is approximately sparse, which will be quantified by  $\delta>0$. The SCR events signal $\vx$ lies in the following set
\begin{equation}\label{equ:Xsd}
\Xc_{s,\delta}:=\{\vx\in\R^n:\|\vx-\vx_s\|_1\leq\delta\}.
\end{equation}

For the baseline signals, we will have two different models. The first model is a more traditional model where the baseline vector $\vb$ is slowly varying. Let 
\begin{equation}
\Bc:=\{\vb\in\R^n: \|\mD\vb\|_\infty\leq \frac{1}{n}\}.
\end{equation}
be the set of slowly varying signals. To allow for noise at the level $\gamma>0$, we define
\begin{equation}\label{equ:B1g}
\Bc_\gamma:=\{\vb\in\R^n: \mD\vb=\mD\vf+\ve,\text{ for some }\vf\in\mathcal{BC}, \|\ve\|_1\leq\gamma\}.
\end{equation}
Signals in $\Bc_{\gamma}$ only differ from signals in $\Bc$ by a small perturbation $\ve$ in terms of the backward difference. This is referred as the BC model\footnote{``B'' stands for baseline and ``C'' stands for continuously/slowly varying.}.

The second model follows the baseline model in \cite{eda_cs} to account for discontinuous jumps, which models the motion artifact in wearable devices. Let
\begin{equation}
\Bj_{k}:=\{\vb\in\R^n: \|\mD\vb\|_0\leq k\}.
\end{equation}
be the set of  signals whose difference is $k$-sparse. Similar to \eqref{equ:B1g}, we define a perturbation of $\Bj_k$ at the level $\gamma$:
\begin{equation}
\Bj_{k,\gamma}:=\{\vb\in\R^n: \mD\vb=\mD\vx+\ve,\text{ for some }\vx\in\Bj_{k}, \|\ve\|_1\leq\gamma\}.
\end{equation}
We call $\Bj_{k,\gamma}$ the BJ model.

With $n=360, \gamma=1$, Figure \ref{fig:b}(a) shows a signal in  baseline model  $\Bc_\gamma$  and Figure \ref{fig:b}(b) shows a signal in  baseline model  $\Bj_{k,\gamma}$ with $k=3$ jumps.
\begin{figure}[htb]
\includegraphics[width=\textwidth]{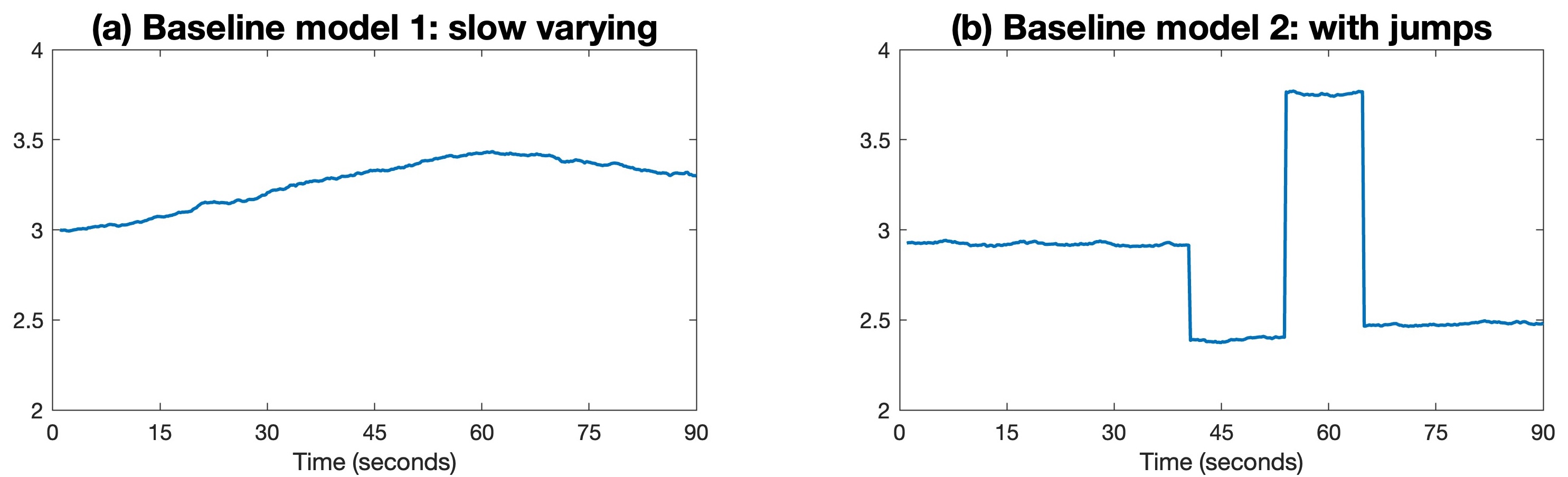}
\caption{Illustration of different baseline models. (a) Baseline model 1: the BC model where $b\in\Bc_\gamma$; (b) Baseline model 2: the BJ model where $b\in\Bj_{k,\gamma}$}\label{fig:b}
\end{figure}

\subsection{The filter $\vh$}
Given $\tau_1>\tau_2>0$, we define the kernel function
\begin{equation}\label{equ:h}
f(t)=2(e^{-t / \tau_1} - e^{-t / \tau_2}), \quad t\geq0,
\end{equation}
which is the bi-exponential impulse response from the psychophysiology literature~\cite{A05}.
Figure~\ref{fig:f} displays this function for a particular choice of the parameters $\tau_1, \tau_2$.

\begin{figure}[htb]
\centering
\begin{tikzpicture}[scale=2.2]
\draw[->](0,0)--(2.3,0) node[below]{$t$};
\draw[->](0,0)--(0, 1.6);
\filldraw (0,1) circle(0.5pt) node[left]{1};
\filldraw (0,1.5) circle(0.5pt) node[left]{1.5};
\filldraw (0,0.5) circle(0.5pt) node[left]{0.5};
\filldraw (1,0) circle(0.5pt) node[below]{20};
\filldraw (2,0) circle(0.5pt) node[below]{40};
\filldraw (0.5,0) circle(0.5pt) node[below]{10};
\filldraw (1.5,0) circle(0.5pt) node[below]{30};
\draw[ultra thick] plot[domain=0:2, samples=100, smooth] (\x, {2*exp(-\x/10*20)-2*exp(-\x*20)});
\draw (1, -0.3) node{Time (seconds)};
\end{tikzpicture}
\caption{Impulse Response $f$ when $\tau_1=10, \tau_2=1$}\label{fig:f}
\end{figure}

The boldface $\vh$ will be a discretized version of $f$, sampled at 4Hz in the interval $t\in[0, T]$, where $T$ is the duration of the signal in seconds.
We will also pad 0's to the end of $h$ so that $h$ has the same length as each signal.
Therefore $\vh=(f(0.25), f(0.5), \cdots, f(T), 0, \cdots, 0)\in\R^{n}$.
Specifically, $\vh_j$, the $j$th coordinate of $\vh$, is the following
$$\vh_j=\left\{\begin{array}{ll}
f(0.25(j-1)), &j=1, 2, \cdots, 4T\\
0, &j = 4T+1, \cdots, n\end{array}\right.$$


The convolution $\vh*\vx$ can be written as a matrix vector multiplication $\mH\vx$ where the matrix $\mH$ is the convolution matrix associated with $\vh$, so \eqref{equ:model} can also be written as
\begin{equation}\label{equ:modelH}
\vy=\vb+\mH\vx+\ve.
\end{equation}
In our modeling, we use a square $\mH$ as
\begin{equation}\label{equ:H}
\mH=\begin{bmatrix}
\vh_1&0&\cdots&0\\
\vh_2& \vh_1&\vdots&\vdots\\
\vdots &\vdots&\ddots&0\\
\vh_n&\vh_{n-1}&\vdots&\vh_1
\end{bmatrix}\in\R^{n\times n},
\end{equation}
which is the top square submatrix of the convolution matrix in \cite{eda_cs}.

\subsection{Review of the Compressed Sensing Based Decomposition}\label{sec:cs}
A compressed sensing based approach was proposed in \cite{eda_cs}.
Let $\vy_0=\vb+\mH\vx+\ve$ be the observed EDA signal. The difference operator $\mD$ is then applied to the equation to reduce the motion artifact of the baseline. 
 This gives the equation
\begin{equation}
\mD\vy_0= \mD\vb+\mD\mH\vx+\mD\ve,
\end{equation}
which can be rewritten as 
\begin{equation}\label{equ:csb}
\mD\vy_0=\begin{bmatrix}\mD\mH&\mI\end{bmatrix}\begin{bmatrix}\vx\\ \mD\vb\end{bmatrix}+\mD\ve.
\end{equation}
For the BJ model where $\vb\in\Bj_{k,\gamma}$, $\mD\vb$ is approximately $k$ sparse, and therefore all together $\vz_0=\begin{bmatrix}\vx\\ \mD\vb\end{bmatrix}$ is approximately sparse.

Let $\mA=\begin{bmatrix}\mD\mH&\mI\end{bmatrix}$, then \eqref{equ:csb} is 
\begin{equation}
\mD\vy_0 = \mA\vz_0+\mD\ve.
\end{equation}

The recovery of $\vz_0$ can then be formulated as the Least Absolute Shrinkage and Selection Operator (Lasso)~\cite{hastie, fista}:
\begin{equation}\label{equ:lasso}
\hat \vz=\argmin_{\vz\in\R^{2n-1}}\{\lambda\|\vz\|_1+\frac{1}{2}\|\mD\vy_0-\mA\vz\|_2^2\},
\end{equation}
with an appropriately chosen $\lambda$.  
The recovered SCR is the first $n$ coordinates of $\hat z$ as
\begin{equation}\label{equ:csx}
\hat \vx = \hat \vz_{[n]}.
\end{equation}
We use $\hat \vx=CS(\vy_0)$ to indicate the CS method which combines \eqref{equ:lasso} and \eqref{equ:csx}. 

\section{Our Decomposition Method using Matrix Separation}\label{sec:gms}
The key idea of our method is to recover from multiple EDA signals jointly using matrix separation.
Let $\vy_1, \vy_2, \cdots, \vy_K\in\R^n$ be $K$ observed EDA signals, and we have 
\begin{equation}\label{equ:gms1}
\vy_i=\vb_i+\mH\vx_i+\ve_i, \quad i=1, 2, \cdots, K
\end{equation} following \eqref{equ:modelH}. 
These $K$ signals could come from one subject or multiple subjects.
Our model will process all $K$ signals simultaneously as we put them together as columns of one matrix. Let $\mY_0=[\vy_1,\cdots,\vy_K], \mB_0=[\vb_1, \cdots, \vb_K], \mX_0=[\vx_1,\cdots,\vx_K], \mE=[\ve_1,\cdots,\ve_K]$, \eqref{equ:gms1} then becomes
\begin{equation}\label{equ:gms2}
\mY_0=\mB_0+\mH\mX_0+\mE.
\end{equation}
The matrix $\mB_0$ will be an approximately low rank matrix given the baseline signal is slowly varying (with or without a few jumps). See Section \ref{sec:syn_i} for a simulated example. The matrix $\mX_0$ will be a sparse matrix since every column is sparse. 

Our decomposition problems becomes recovering $\mX_0$ given $\mY_0$ in \eqref{equ:gms2}, where $\mH$ is known and the noise $\mE$ is reasonably controlled. Theoretically, such a problem was first raised and explored in \cite{CW25} by solving the convex optimization problem
\begin{equation}\label{equ:gms}
(\hat \mB,  \hat \mX)=\argmin_{\mB, \mX}\{\|\mB\|_*+\lambda \|\mX\|_1\} \quad \text{subject to }\mY_0=\mB+\mH\mX.
\end{equation}
The alternating direction method of multipliers (ADMM)~\cite{boyd2011distributed} can be used to solve \eqref{equ:gms}. 
While equation \eqref{equ:gms2} includes an explicit noise term $\mE$, the strict equality constraint in \eqref{equ:gms} can be enforced during optimization because ADMM naturally handles small perturbations, distributing the residual noise into the recovered sparse and low-rank components. 

We will solve a relevant, but not necessarily equivalent problem that was proposed in \cite{CD25}. Let $\mU_\mH\mathbf{\Sigma}_\mH\mV_\mH^\top $ be a reduced SVD (see Appendix \ref{sec:math}) of $\mH$ and let $\mC=\mU_\mH\Sigma_\mH^{-1}\mU_\mH^\top$. It is argued in \cite{CD25} that $(\hat \mB_c,  \hat \mX_c)$ from the following program is more accurate, robust, and computationally efficient at recovering $(\hat \mB_0,  \hat \mX_0)$.
\begin{equation}\label{equ:gmsc}
\begin{cases}
(\hat{\mW}_c, \hat{\mX}_c) = \displaystyle \argmin_{\mW, \mX} \{\|\mW\|_*+\lambda \|\mX\|_1\} , \quad \text{subject to }  \mC \mY_0=\mW + \mC\mH\mX  \\
\hat{\mB}_c = \mY_0 - \mH \hat{\mX}_c.
\end{cases}
\end{equation}
Intuitively, \eqref{equ:gmsc} is obtained by  multiplying the constraint equation in \eqref{equ:gms} by $\mC$ on the left.
The new filter is $\tilde \mH =\mC\mH=\mU_\mH\mV_\mH^\top$ whose condition number is 1. Theoretical guarantee of using \eqref{equ:gmsc} for recovering $\mX_0$ (and $\mB_0$ if desired) is presented in Appendix \ref{sec:theory}.

The framework of general matrix separation is broader than decomposing EDA  signals.  For example, In the setup \eqref{equ:gms2}, the sparse matrix $\mX_0$ can have negative coordinates and still be successfully recovered via \eqref{equ:gms}.
The work \cite{CD25} talked about other applications such as simultaneous video background separation and deblurring.

%
%

\subsection{Overlapped Reshape}\label{sec:or}
The GMS method we proposed seemingly requires \textbf{multiple} EDA signals. What if we only wish to process one single EDA signal? This section addresses this question by performing a simple signal reshape.

We first pick the number of cuts $C\geq2$ and the overlap ratio $q\in[0,1)$. Then a single EDA signal $\vy=(y_1, y_2, \cdots, y_N)$ can be transformed into a  matrix $\OR_{C,q}(\vy)=[\vw_1, \cdots, \vw_m]$ following the steps below:
\begin{itemize}
\item The length of each $\vw_i$ is $n=\lfloor N/C\rfloor$. The overlapped length is $t=\lfloor nq\rfloor$.
\item $m=\left\{\begin{array}{ll}
\lfloor\frac{N-n}{n-t}\rfloor+2,& \text{if }\lfloor\frac{N-n}{n-t}\rfloor<\frac{N-n}{n-t}\\
\lfloor\frac{N-n}{n-t}\rfloor+1, & \text{if }\lfloor\frac{N-n}{n-t}\rfloor=\frac{N-n}{n-t}
\end{array}\right.$
\item $\vw_i=[y_{(i-1)(n-t)+1}, \cdots, y_{(i-1)(n-t)+n}]^\top$ for $i=1, 2, \cdots, m-1$. This ensures the later $q$ portion of $\vw_i$ is the same as the first $q$ portion of $\vw_{i+1}$.
\item $\vw_m=[y_{N-n+1}, \cdots, y_N]^\top$.
\end{itemize}

This transformation can be thought of as an \emph{overlapped reshape}, hence the function name $\OR$. It is as simple as rearranging the EDA signal $\vy$ into an appropriately sized matrix column-wise if we assume each column has no overlap. For example when $N=12, C=2, q=0$, we have
$$
 \OR_{2,0}\left(\begin{bmatrix}y_1\\y_2\\\vdots\\y_{12}\end{bmatrix}\right)=\begin{bmatrix}
y_1&y_7\\
y_2&y_8\\
\vdots&\vdots\\
y_6&y_{12}
\end{bmatrix}.
$$

As another illustrative example, let $N=12, C=2, q=0.5$, then we have the following transformation/reshape:
$$
 \OR_{2,0.5}\left(\begin{bmatrix}y_1\\y_2\\\vdots\\y_{12}\end{bmatrix}\right)=\begin{bmatrix}
y_1&y_4&y_7\\
y_2&y_5&y_8\\
\vdots&\vdots&\vdots\\
y_6&y_9&y_{12}
\end{bmatrix}.
$$
The parameter $C=2$ determines that each new EDA signal length is $N/2=6$. Since $q=0.5$, we then have an overlap of 50\% between neighboring columns.

Using the reshaped matrix $\OR_{C,q}(\vy)\in\R^{n\times m}$ as the input $\mY_0$ in \eqref{equ:gms}, the recovered SCR events $G(\OR_{C,q}(\vy))$ needs to be transformed back to $N\times 1$. This inverse transformation, denoted by $\OR_{C,q}^{-1}$, is simply a special vectorization, where the average value is used for the overlapped portion. As an example, we use $N=12, C=2, q=0.5$ again, then
$$
\OR_{2,0.5}^{-1}\left(\begin{bmatrix}x_{11}&x_{12}&x_{13}\\
x_{21}&x_{22}&x_{23}\\
x_{31}&x_{32}&x_{33}\\
x_{41}&x_{42}&x_{43}\\
x_{51}&x_{52}&x_{53}\\
x_{61}&x_{62}&x_{63}
\end{bmatrix}\right)=\begin{bmatrix}
x_{11}\\
x_{21}\\
x_{31}\\
0.5x_{41}+0.5x_{12}\\
0.5x_{51}+0.5x_{22}\\
0.5x_{61}+0.5x_{32}\\
0.5x_{42}+0.5x_{13}\\
0.5x_{52}+0.5x_{23}\\
0.5x_{62}+0.5x_{33}\\
x_{43}\\
x_{53}\\
x_{63}
\end{bmatrix}.
$$
In particular, $\OR_{C,q}^{-1}(\OR_{C,q}(\vy))=\vy$.

If the input is a matrix $\mY=[\vy_1, \cdots, \vy_K]$, then $\OR_{C,q}(\cdot)$ process it column-wise as 
$$\OR_{C,q}(\mY) = [\OR_{C,q}(\vy_1), \cdots, \OR_{C,q}(\vy_K)].$$
The inverse transformation $\OR_{C,q}^{-1}(\cdot)$ is processed similarly if the input is from reshaping a matrix consisting of multiple EDA signals, as long as we keep track of the shape of $\OR_{C,q}(\vy_i)$.

There are several advantages with this overlapped reshaping preprocess.
\begin{itemize}
\item This allows our GMS method to be applied to a single EDA signal.
\item If we simply chop the EDA signals into non-overlapping pieces, the detection of peaks may be missed if they are near the cutoff. Overlapped reshape circumvents this issue. Figure \ref{fig:eda_chop} shows the reshape of $\vy\in\R^{240}$ from Figure \ref{fig:decomp_img} with the parameters $C=2, q=0.5$. A peak occurs near index 120 which is the end of the first piece, but has a chance to be recovered from the second piece.
\item This overlapped reshape can be applied to any other decomposition method such as in \eqref{equ:csp}. This increases computational efficiency given the decreased signal length. Related numerical experiments are conducted in Section \ref{sec:syn_p}.
\end{itemize}


\begin{figure}[htbp]
\centering
\includegraphics[width=0.8\textwidth]{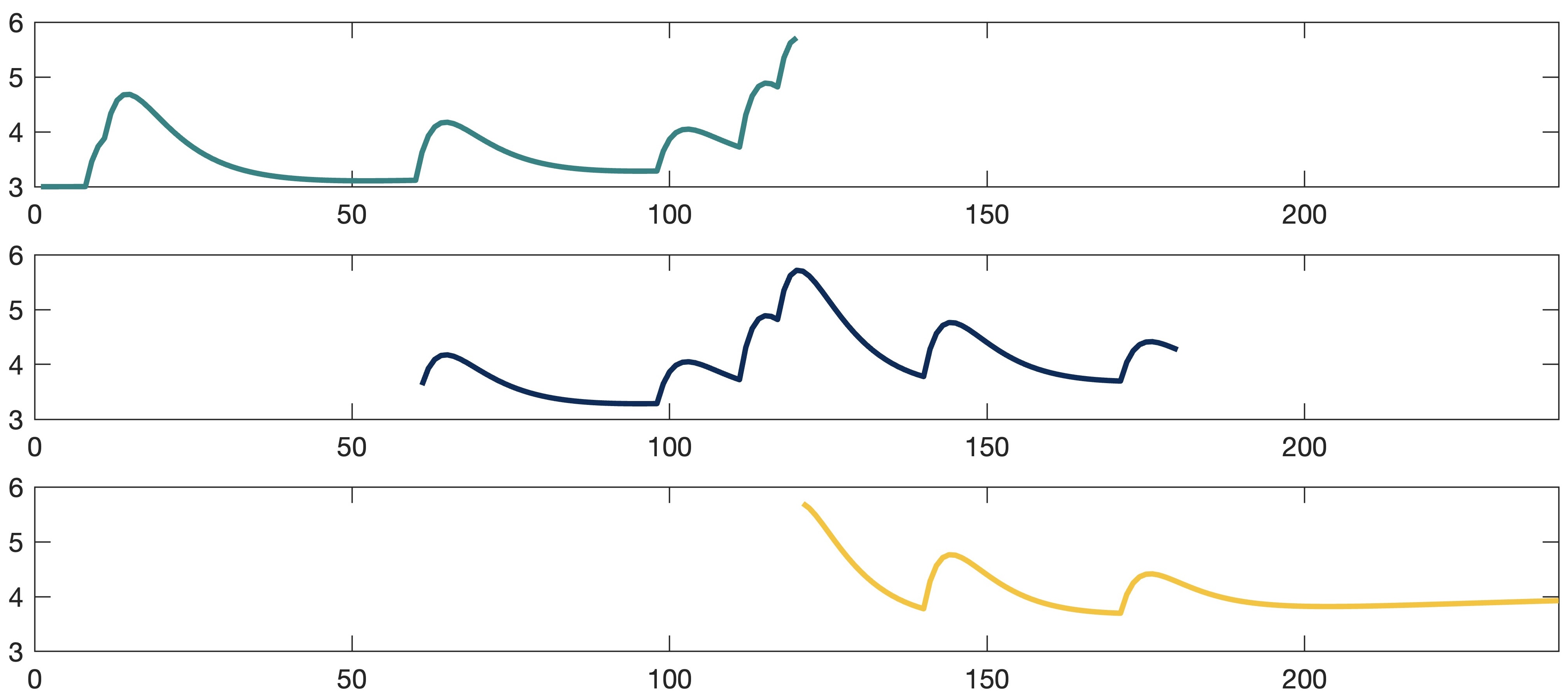}
\caption{The simulated EDA signal from Figure \ref{fig:decomp_img} has been reshaped into three pieces on the same timestamp, with 50\% overlap. These 3 pieces will then become the 3 columns of the reshaped matrix, with aligned index.}
\label{fig:eda_chop}
\end{figure}

\subsection{Model Summary}
Given $\mY_0=[\vy_1,\cdots, \vy_K]\in\R^{n\times K}$, 
\begin{sialsbox}{GMS and GMS-P}{}
We use $\hat \mX_c=G(\mY_0)$ from \eqref{equ:gmsc} to indicate the GMS method.  If we use overlapped reshape, we will call it GMS-P where P stands for parallel.
\begin{equation}\label{equ:gmsp}
\text{GMS-P:}\quad\begin{array}{ll}
1. & \mR_{\mY}=\OR_{C, q}(\mY_0)\\
2. & \mR_{\mX} = G(\mR_{\mY})\\
3. & \hat \mX = \OR_{C, q}^{-1}(\mR_{\mX})
\end{array}
\end{equation}

\end{sialsbox}

As mentioned, the overlapped reshape can also be applied to other methods. For example, given observed $\mY_0$, the 3 steps of CS-P is 
\begin{equation}\label{equ:csp}
\text{CS-P:}\quad\begin{array}{ll}
1. & \mR_{\mY}=\OR_{C, q}(\mY_0)\\
2. & \mR_{\mX} = CS(\mR_{\mY}) \text{ (see Section \ref{sec:cs})}\\
3. & \hat \mX = \OR_{C, q}^{-1}(\mR_{\mX})
\end{array}
\end{equation}
However, unlike $G(\cdot)$, the function $CS(\cdot)$ still processes the whole matrix column-by-column, not utilizing any interaction between the columns. Therefore it is expected that CS and CS-P will have similar recovery performance, but CS-P may gain computational efficiency. This is demonstrated in Section \ref{sec:syn_p} (Table \ref{tab:pr} in particular).


\subsubsection*{Post-processing}
As mentioned, our GMS method can recover the sparse matrix $\mX$ even when the entries are negative. However, for the EDA signal decomposition problem, it is usually assumed that each phasic component is positive. For both GMS and CS method, we simply set all the negative entries of $\hat \mX$ to 0.

\section{Numerical Experiments}\label{sec:exp}
We conduct extensive numerical experiments on both synthetic and real world data. The experiments were run on a MacBook Pro with Apple M3 chip and 8GB RAM, using Matlab 2023b or Python 3.11.14. Python was only used when using the package NeuroKit2.

\subsection{Experiments On Synthetic Data}\label{sec:syn}

To generate the SCR events signal $\vx\in\Xc_{s,\delta}$ (as defined in \eqref{equ:Xsd}), we first  pick $s$ coordinates uniformly at random as the  support of $\vx$. The value of each nonzero coordinate are i.i.d., following one of the two different random distributions: exponential distribution with mean 2 (exponential model, labeled as XE) or uniform distribution in the interval $[2,7]$ (uniform model, labeled as XU). Finally, a standard Gaussian vector, rescaled to have $\ell_1$ norm $\delta$, is added to the sparse vector. 
We consider the XE model more challenging since some of its coordinates can be  close to 0. 

As described in Section \ref{sec:sig}, there are two models for the baseline signal $\vb$. Recall that the continuous and slow varying model is labeled BC and BJ is the model with jumps. 

The noise vector $\ve$ follows the normal distribution and then rescaled such that the $\ell_2$ norm is $\epsilon$. This is summarized in Table \ref{tab:random}.

\begin{table}[htbp]
\caption{Details of Signal Models with Random Noise}\label{tab:random}
\begin{tabular}{c | c | c | c | c}
Name&\multicolumn{4}{c}{Description}\\
\hline
XE&\multirow{2}{*}{SCR $\vx$} &\multirow{2}{*}{$\vx=\vx_s+\ve_\vx\in\Xc_{s,\delta}$}& entries of $\vx_s\stackrel{i.i.d.}{\sim}\text{exp}(2)$ &\multirow{2}{*}{$\|\ve_\vx\|_1=\delta$}\\\cline{1-1}\cline{4-4}
XU&&&entries of $\vx_s\stackrel{i.i.d.}{\sim}\text{unif}([2,7])$\\
\hline
BC&\multirow{2}{*}{baseline $\vb$}&$\vb=\vf+\ve_\vb\in\Bc_\gamma$&$\vf\in\Bc$ generated by spline fitting&\multirow{2}{*}{$\|\mD\ve_\vb\|_1=\gamma$}\\\cline{1-1}\cline{3-4}
BJ&&$\vb=\vf+\ve_\vb\in\Bj_{k,\gamma}$& $\vf\in\Bj_k$, entries of $\mD\vf\stackrel{i.i.d.}{\sim}\text{N}(0,1)$\\
\hline
&noise $\ve$ & \multicolumn{3}{c}{coordinates of $\ve\stackrel{i.i.d.}{\sim}\text{N}(0,1)$, followed by scaling such that $\|\ve\|_2=\epsilon$} \\
\hline
\end{tabular}
\end{table}

The filter $\mH$, as defined in \eqref{equ:H}, solely depends on the values of $\tau_1$ and $\tau_2$. We use $\tau_1=2, \tau_2=0.75$ \cite{BK10} in all the synthetic experiments except for Section \ref{sec:tau12}.

The EDA signal $\vy$ is then generated using formula \eqref{equ:modelH}. We say that $\vy$ follows the XE-BC model if the SCR events signal $\vx$ follows the XE model and the baseline $\vb$ follows the BC model. The same goes for the other three combinations. The work \cite{eda_cs} uses the XE-BJ model.

For synthetic experiments, we only compare our method to the CS method due to similarity in signal models.

After some tuning, we use a universal $\lambda=0.02$ in \eqref{equ:lasso} for the CS method and $\lambda=\frac{3}{\sqrt{\max(\text{size}(\mY_0,1), \text{size}(\mY_0,2))}}$ in \eqref{equ:gmsc} for the GMS method. \textbf{These parameter choices are used in Section \ref{sec:real} as well}.

\subsubsection{Initial experiments}\label{sec:syn_i}

The first experiment has the noise level $\epsilon=0.3$ and sparsity $s=10$ fixed.
We randomly generates SCR events signal $\vx\in\R^{370}$ with $\delta=10$, baseline $\vb$ with $\gamma=10$. $K=40$ instances were created, and the average recovery relative error $\frac{\|\hat \vx-\vx\|_2}{\|\vx\|_2}$ over these 40 instances are calculated. Both SCR models XE and XU, and both baseline model BC and BJ are tested, which resulted  Table \ref{tab:rel}. 

\begin{table}[htb]
\caption{Mean Relative Error By Both Methods: better results are in boldface}\label{tab:rel}
\centering
\begin{tabular}{cc|cc|cc|cc}
&&\multicolumn{4}{c}{Baseline $\vb$ model}\\
&&\multicolumn{2}{c}{BC}&\multicolumn{2}{c}{BJ, 1 jump} &\multicolumn{2}{c}{BJ, 2 jumps}\\
\hline
&&CS &GMS &CS &GMS &CS &GMS\\
\multirow{2}{*}{SCR $\vx$ model}&XE&0.385 & \textbf{0.203 }&0.457 & \textbf{0.298} & 0.465 & \textbf{0.460}\\
&XU&0.406 & \textbf{0.132}&0.359 & \textbf{0.169} & 0.371 & \textbf{0.230}\\
\end{tabular}
\end{table}

We see from Table \ref{tab:rel} that our GMS method performs better for all 6 signal models. The CS method was motivated to recover events more robustly against model motion artifacts (BJ model). In this initial experiment, we demonstrate that our GMS method works even better in this scenario.

We also test that the simulated baseline matrix $\mB\in\R^{370\times40}$ from the BJ model (2 jumps) is approximately rank-1 as $\frac{\|\text{best rank-1 appr. of }\mB_0\|_F}{\|\mB_0\|_F}=0.987$.

Figure \ref{fig:initial} plots one instance (out of 40) of the recovered SCR events $\hat \vx$ against the ground truth, for both methods when the models XU and BJ are used. Both methods detect the stimulus events location quite effectively, but are both underestimating the magnitudes of the strength, with our GMS method being more accurate. We also framed two locations where events detection errors are made by the CS method.
\begin{figure}[htbp]
\centering
\includegraphics[width=0.85\textwidth]{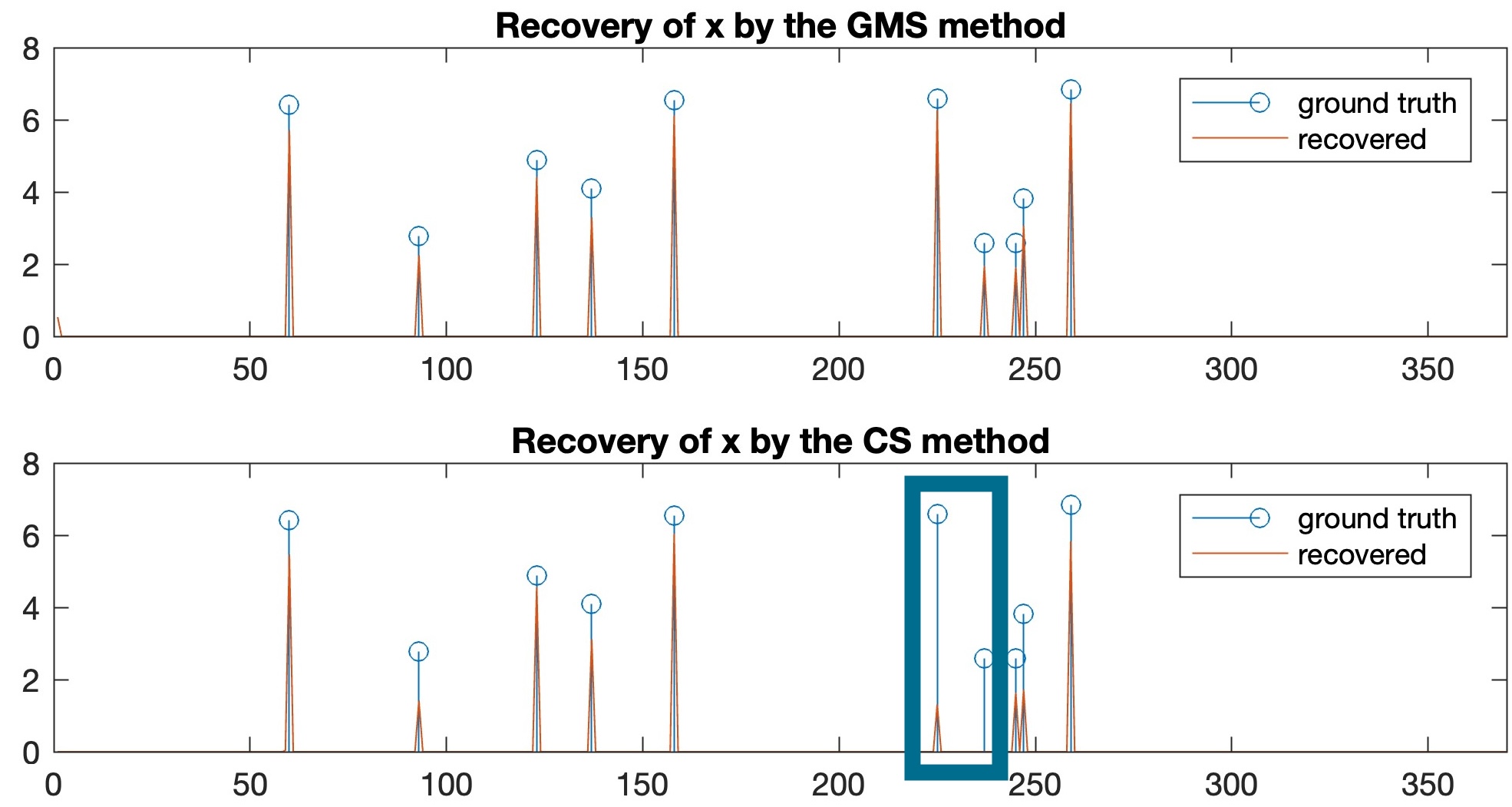}
\caption{Recovered $\vx$ for the XU-BJ (1 jump) model. Other parameters are $s=10, \delta=10, \gamma=10, \epsilon=0.3$. In the frame, one event is detected but with minimal magnitude while the other event is missed entirely by the CS method.}\label{fig:initial}
\end{figure}

\subsubsection{Exploration on sparsity level and noise level}
 In this experiment, we test on a range of noise and sparsity levels. Specifically, the noise level $\epsilon$ will be one of these six values: $\{0.02, 0.08, 0.16, 0.32, 0.64, 1.28\}$, and the sparsity $s$ will be  from the list $\{1,4,7,10,\cdots, 28,31\}$. We let $n=240, K=40, \delta=1, \gamma=1$ be fixed.

For each fixed $\epsilon$ and $s$, we randomly generate $K=40$ instances of $\vx, \vb, \ve$ (and therefore $\vy$) following Table \ref{tab:random}. We then compute the mean relative error on recovering $\vx$, averaged over these 40 trials by either the CS method or the GMS method. Figures \ref{fig:heat1}-\ref{fig:heat4} show heatmaps of mean relative error for the four different signal models listed in Table \ref{tab:random}.

Figure \ref{fig:heat1} shows heatmaps when $\vy$ follows XE-BC model. The GMS method produces much better results across the board. Moreover, the GMS method allows for bigger sparsity and noise level.

\begin{figure}[htb]
\includegraphics[width=\textwidth]{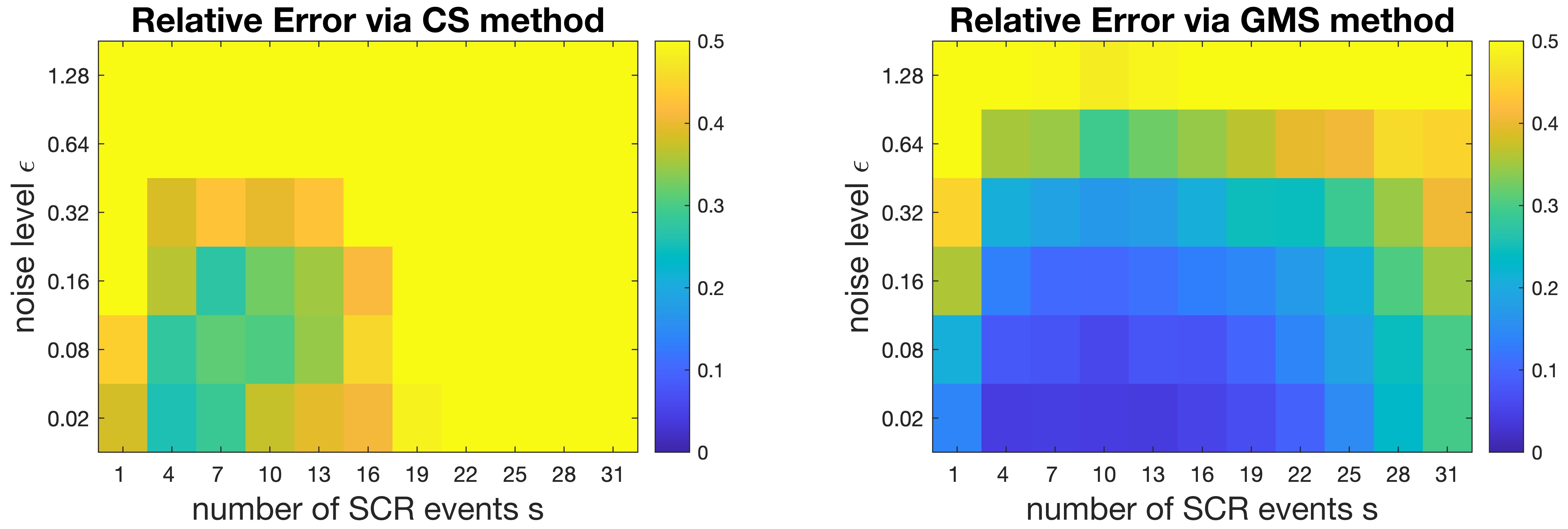}
\caption{Mean relative error for various $s$ and $\epsilon$ with $\vx$ and $\vb$ follow the \textbf{XE-BC} combination}\label{fig:heat1}
\end{figure}

Figure \ref{fig:heat_mu} shows heatmaps when $\vy$ follows the XU-BC model. Once again, the GMS method produces much better results across the board. For the uniform signal model, the recovery performance has a sharp cutoff on certain sparsity level. For the CS method, recovery is poor if $s\geq13$ and for the GMS method, the recovery is poor if $s\geq19$.

\begin{figure}[htb]
\includegraphics[width=\textwidth]{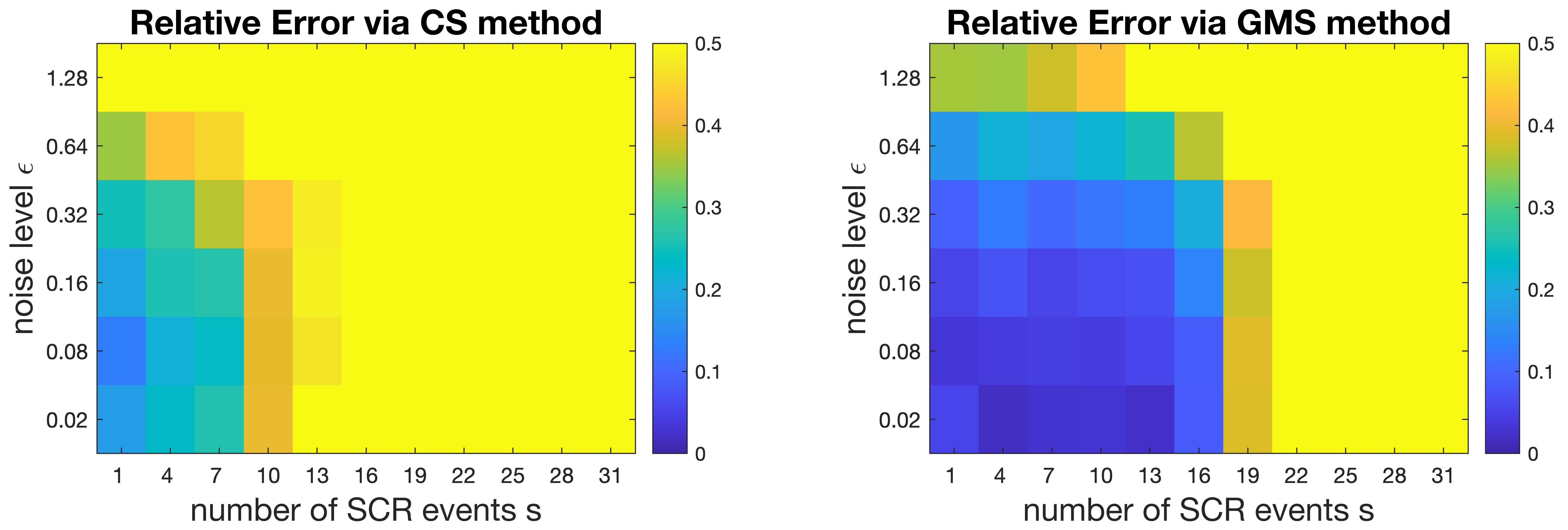}
\caption{Mean relative error for various $s$ and $\epsilon$ with $\vx$ and $\vb$ follow the \textbf{XU-BC} combination}\label{fig:heat_mu}
\end{figure}

Figure \ref{fig:heat2} shows heatmaps when $\vy$ follows the XU-BJ model with 1 random shift. The CS method produces better recovery for small sparsity values $s=1,4,7$. However, our GMS method performs better with bigger sparsity even though the baseline model BJ is more favorable to the CS method.

\begin{figure}[htb]
\includegraphics[width=\textwidth]{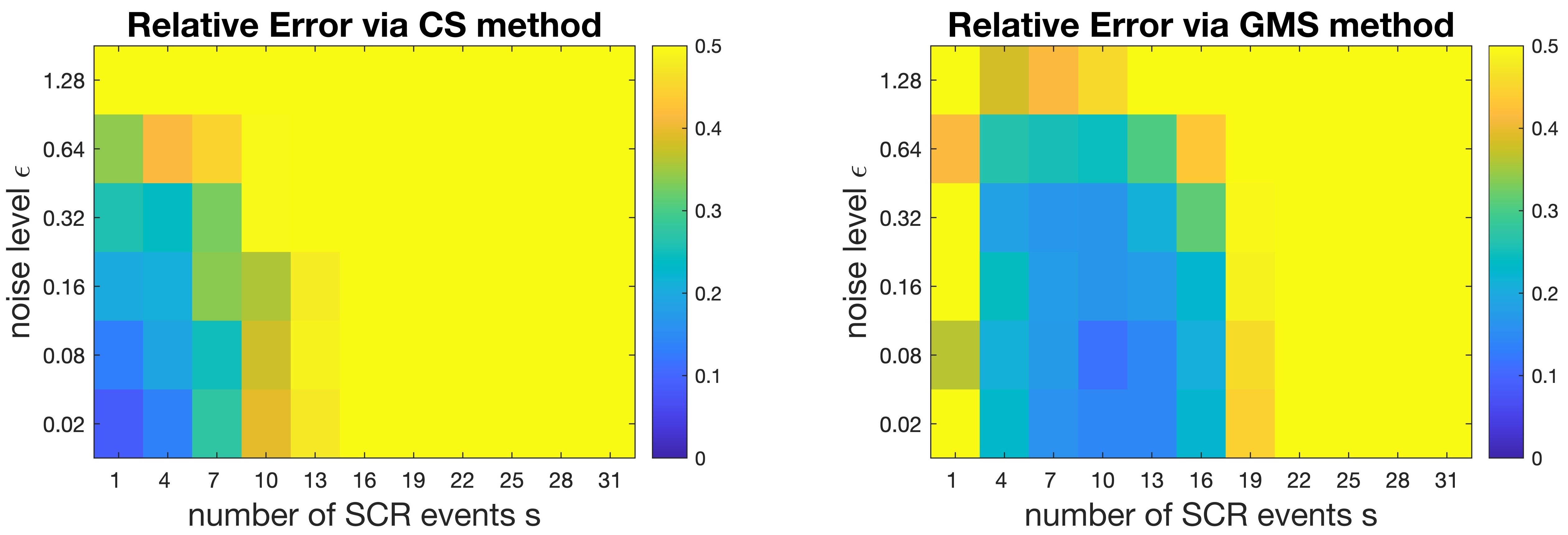}
\caption{Mean relative error for different values of $s$ and $\epsilon$ with $\vx$ and $\vb$ follow XU-BJ combination}\label{fig:heat2}
\end{figure}

Figure \ref{fig:heat4} shows heatmaps when $\vy$ follows the XE-BJ model with 1 random shift. 
The performance is similar to Figure \ref{fig:heat2} where $\vb$ also follows the BJ model: The CS method produces better recovery for small sparsity values ($s=1,4,7, 10$) whereas our GMS method performs better with bigger sparsity.

\begin{figure}[htbp]
\includegraphics[width=\textwidth]{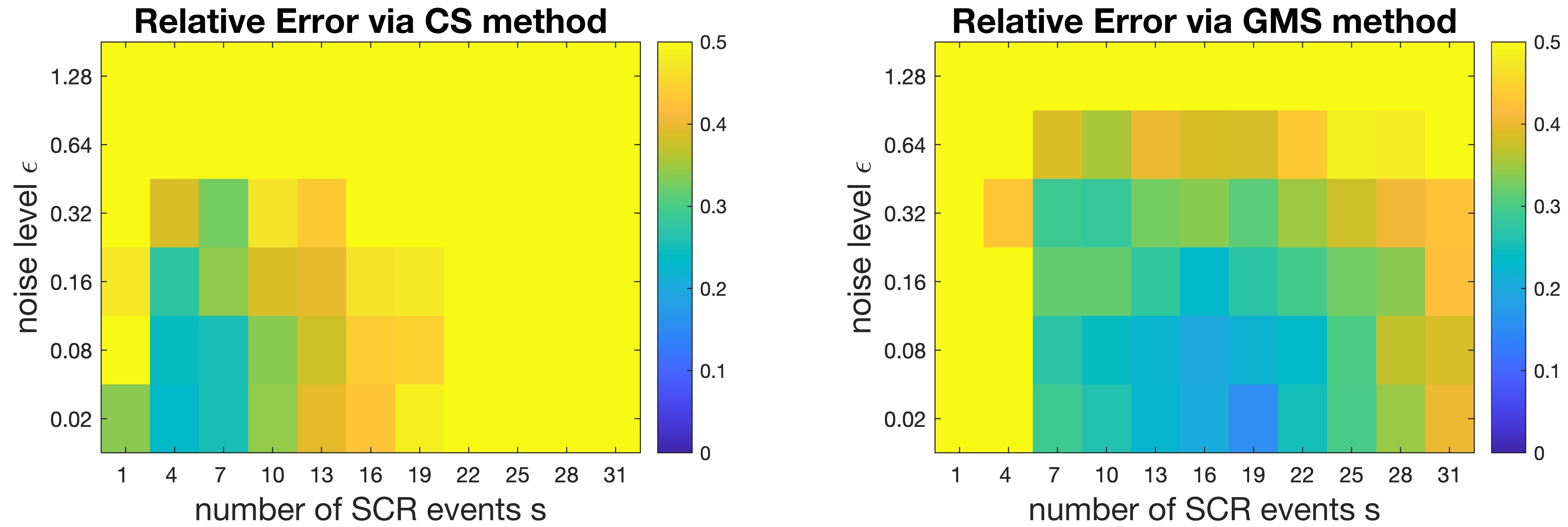}
\caption{Mean relative error for different values of $s$ and $\epsilon$ with $\vx$ and $\vb$ follow XE-BJ combination}\label{fig:heat4}
\end{figure}

\subsubsection{Exploration on effect of $\tau_1, \tau_2$}\label{sec:tau12}
In the previous two experiments, we have  fixed $\mH$ using the parameters $\tau_1=2, \tau_2=0.75$. However, the recovery performance may rely on the filter $\mH$ as well. In this experiment, we choose 5 different values of $\tau_1\in\{2,4,6,8,10\}$ and 3 different values of $\tau_2\in\{0.5,0.75,1\}$, which creates 15 different combinations of $\tau_1$-$\tau_2$ (therefore 15 different $\mH$'s). We have $n=240, K=40, s=10, \delta=10, \gamma=10, \epsilon=0.3$ fixed.

For each $\tau_1$-$\tau_2$ combination, we randomly generate $K=40$ instances of $\vx, \vb, \ve$ (and therefore $\vy$) following Table \ref{tab:random}. We then compute the mean relative error on recovering $\vx$, averaged over these 40 trials by either the CS method or the GMS method. Table \ref{tab:tau} shows the relative errors using the XU-BJ (1 random jump/shift) model. 

\begin{table}[htb]
\caption{Relative Errors for 15 different $\mH$'s}\label{tab:tau}

\noindent\begin{tabular}{cc|ccccc|}
\multicolumn{7}{c|}{XU-BJ model: Relative error by \textbf{CS} method}\\\hline
&&\multicolumn{5}{c|}{$\tau_1$}\\
&&2&4&6&8&10\\\hline
\multirow{3}{*}{$\tau_2$}&0.5&0.194  &  0.105 &   0.103 &   0.096  &  0.119\\
&0.75&   0.553   &   0.187   &   0.132  &    0.115    &  0.124\\
 &1&   1.000   &   0.337    &  0.235   &   0.179   &   0.145
    \end{tabular}
    \begin{tabular}{|cc|cccccc}
\multicolumn{7}{|c}{XU-BJ model: Relative error by \textbf{GMS} method}\\\hline
&&\multicolumn{5}{c}{$\tau_1$}\\
&&2&4&6&8&10\\\hline
\multirow{3}{*}{$\tau_2$}&0.5&0.166  &  0.143 &  0.137  &  0.132  &  0.126\\
&0.75&    0.246 &   0.225  &  0.195  &  0.186 &  0.201\\
 &1&   0.399  &  0.290  &  0.276  &  0.252 &  0.240
    \end{tabular}
\end{table}

We see that for the XU-BJ signal model, 
\begin{itemize}
\item For $\tau_2=0.5$, both methods perform well with the CS method slightly outperforms.
\item For $\tau_2=0.75$, both methods perform similarly but the CS method is very poor for $\tau_1=2$.
\item For $\tau_2=1$, both methods perform similarly but the CS method is very poor for $\tau_1=2$.
\end{itemize} 
Overall, for the XU-BJ model where the CS method may have an advantage, the GMS method still performs better on average (right hand side of Figure \ref{fig:tau12}). 
More importantly, the performance by the GMS method is much more stable with respect to the filter. This is a desirable feature as the parameter of $H$ can vary.

We tested the other three signal models as well. To best visualize the results, Figure \ref{fig:tau12} displays the boxplots and summary statistics of all 4 models by both methods. For each method and each signal model, we draw a boxplot of the 15 relative errors to best visualize them. This is the 8 boxplots on the left. To supplement each boxplot, we have supplied the mean and standard deviation on the right. The GMS method is performing better for all the signal models except for XE-BJ. However, the performance is nearly the same while the GMS method has a much smaller standard deviation, indicating more stable performance with respect to the filter $\mH$.

\begin{figure}[htb]
\begin{minipage}[t]{0.47\textwidth}
\vspace{0pt}
\vspace{-3ex}
\includegraphics[width=1\textwidth]{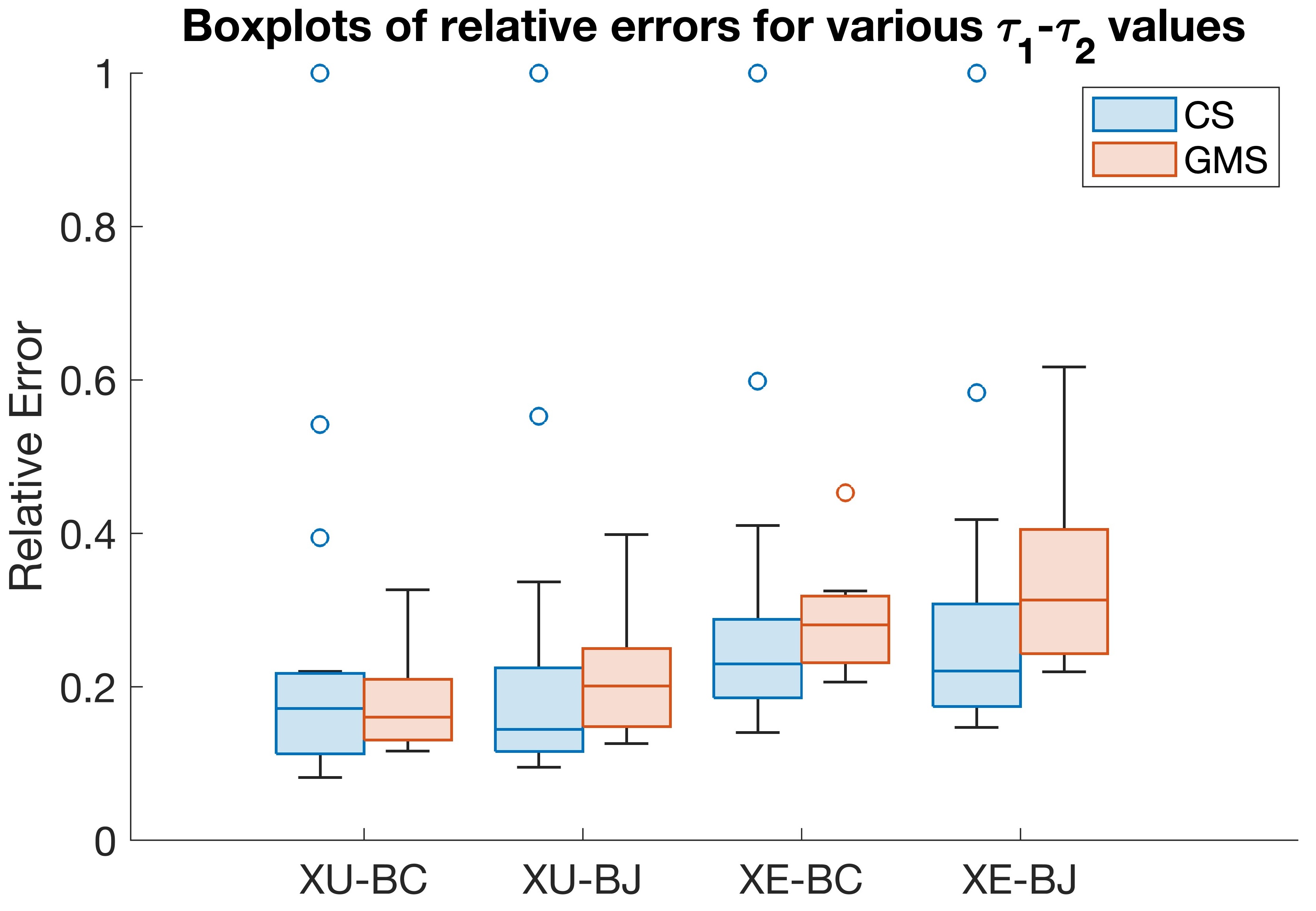}
\end{minipage}\quad
\begin{minipage}[t]{0.5\textwidth}
Mean of relative error over all 15 $\tau_1$-$\tau_2$ values

\vspace{0.05in}
\begin{tabular}{c|cccc}
\hline
mean&XU-BC    &  XU-BJ   &   XE-BC  &    XE-BJ \\
\hline
CS &0.244    &0.241    &0.305    &\textbf{0.299}\\
GMS   &  \textbf{0.180}   &  \textbf{0.214}  &  \textbf{0.281}  &  0.339\\
\hline
\end{tabular}
\vspace{0.2in}

St. dev. of relative error over all 15 $\tau_1$-$\tau_2$ values

\vspace{0.05in}
\begin{tabular}{c|cccc}
\hline
sd&XU-BC    &  XU-BJ   &   XE-BC  &    XE-BJ \\\hline
CS   &  0.243   & 0.241  &  0.224  &  0.227\\
    GMS  &  \textbf{0.057}   & \textbf{0.074}  &  \textbf{0.065}  &  \textbf{0.115}\\\hline
\end{tabular}
\end{minipage}
\caption{Summary graphs and statistics of relative errors of 4 different signals models via the CS or the GMS method. For example, the pair of boxplots of XU-BJ are drawn using the values in Table \ref{tab:tau}.}\label{fig:tau12}
\end{figure}

\subsubsection{Experiments on joint recovery}\label{sec:syn_p}
This experiment focuses on joint/paralleled signal recovery after EDA signal reshaping. It also serves as an exploration of parameter choices for the real data experiments in the next section. The filter $\mH$ is generated with $\tau_1=2, \tau_2=0.75$.

We randomly generate $K$ EDA signals  following the XU-BC model. The parameters are $s=20, \delta=10, \gamma=10, \epsilon=0.3$ with each signal's length to be $N=1360$. This is meant to mimic one subject's EDA signal from real data experiments (See Section \ref{sec:real}). We let $\mY_0$ be the $N\times K$ EDA signal matrix.  We then reshaped each EDA signal using $C=5, q=0.8$, so each $1360\times1$ signal is reshaped into a $272\times 21$ matrix. We use three different methods to recover the SCR signal events:
\begin{itemize}
\item CS method: use each column of $\mY_0$ as input $\vy_0$ in \eqref{equ:lasso}. Done $K$ times.
\item CS-P method: see \eqref{equ:csp}.
\item GMS-P method: see \eqref{equ:gmsp}
\end{itemize}

For each $K\in\{1, 2, 3, 4\}$, we run 40 trials of the above described experiment using all these 3 methods. The mean relative error and run time are displayed in Table \ref{tab:pr}. Among 4 different $K$ values, the GMS-P method has the best accuracy except for $K=1$. The recovery accuracy for GMS-P increases as $K$ increases, demonstrating the power of our joint/paralleled recovery.

\begin{table}[htbp]
\centering
\caption{Performance Comparison Among Three Methods}\label{tab:pr}
\begin{tabular}{cccc}
\multicolumn{4}{c}{Mean Relative Error over 40 trials}\\\hline
&CS&CS-P&GMS-P\\
$K=1$ &    \textbf{0.168}   &  0.179  &  0.195\\
 $K=2$ &    0.165 &   0.180  &  \textbf{0.136}\\
 $K=3$& 0.178&0.190&   \textbf{0.121}\\
 $K=4$& 0.173&    0.187&    \textbf{0.122}
\end{tabular}
\hspace{1cm}
\begin{tabular}{cccc}
\multicolumn{4}{c}{Mean Run Time (seconds) over 40 trials}\\\hline
&CS&CS-P&GMS-P\\
$K=1$ & 2.38 &  0.18 &  \textbf{0.17}\\
$K=2$ & 2.93  &  0.47 &  \textbf{0.26}\\
$K=3$ & 2.70    &\textbf{0.32}&   0.43\\
$K=4$ &2.63 &   \textbf{0.36} &   0.50
\end{tabular}
\end{table}

It is worth noting that although CS and CS-P have similar accuracy, the overlapped reshape (CS-P) is a lot more computationally efficient.

This last synthetic data experiment also paves the way for our real data experiments as it informs a viable choice for $C,q$ in the overlapped reshape transformation.

\subsection{Experiments on Real Data}\label{sec:real}

Thirty volunteers were recruited and  asked to watch a video that lasts 5 minutes and 26 seconds. Among the 30 volunteers, 28 subjects' data was collected. Figure \ref{fig:EDA1markers} shows the EDA signal from one subject with video event markers (see Table \ref{tab:vid}). More details of this experiment can be found in Appendix \ref{sec:exp_detail}.
\begin{figure}[htbp]
\centering
\includegraphics[width=1.\textwidth]{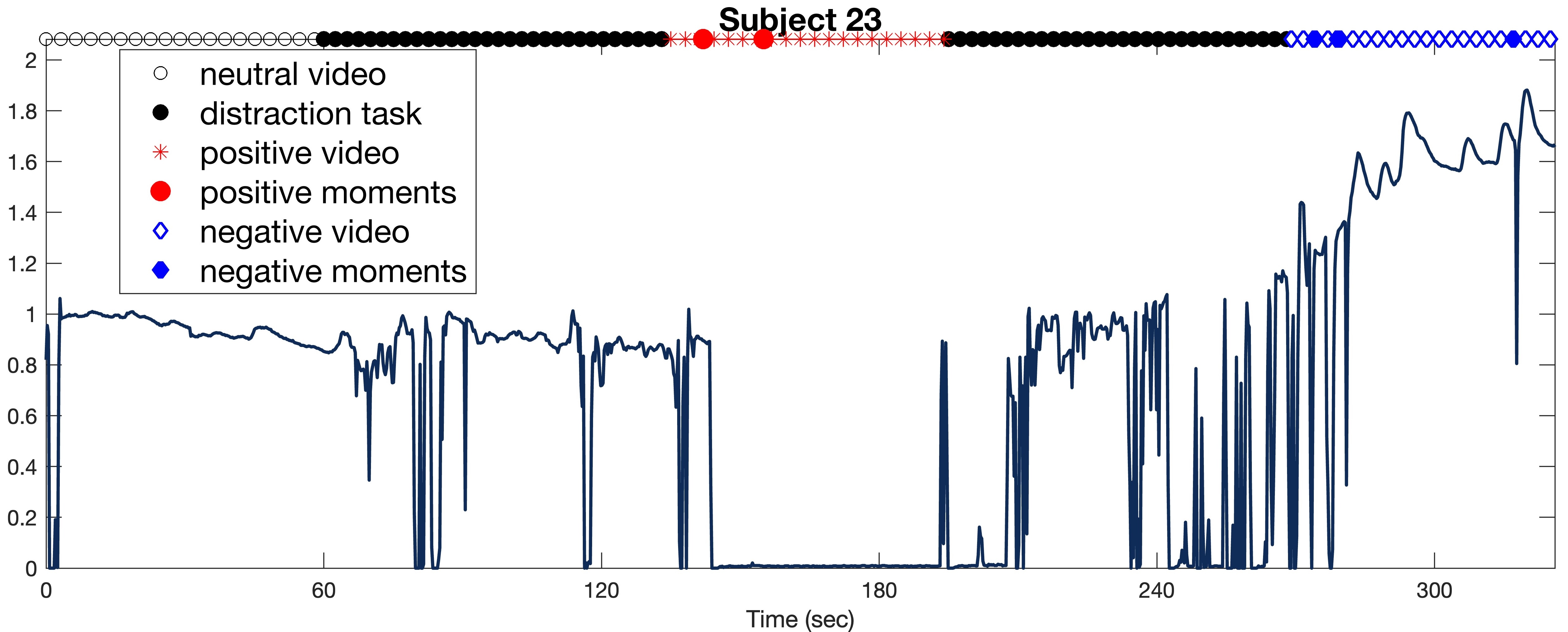}
\caption{EDA signal of Participant 23 with video events markers}\label{fig:EDA1markers}
\end{figure}

There are 5 participants' data that are not usable (almost constantly 0).  In the end, we consider 23 subjects' EDA signal. Each subject's data is trimmed such that the video starts at 7th second.  The signals are sampled at 4Hz so the signal length for each subject is $N=4*(5*60+32) = 1328$. We let $\mY_0=\begin{bmatrix}\vy_1&\vy_2&\cdots&\vy_{23}\end{bmatrix}\in\R^{1328\times 23}$ be the matrix representing this data set where each column is the EDA signal of a subject.
Figure \ref{fig:eda3by3} shows the raw EDA signal of the first 9 subjects. \textbf{For all the figures in this section, time 0 is the start of the video so the first 6 seconds are not plotted.}

\begin{figure}[htb]
\includegraphics[width=0.95\textwidth]{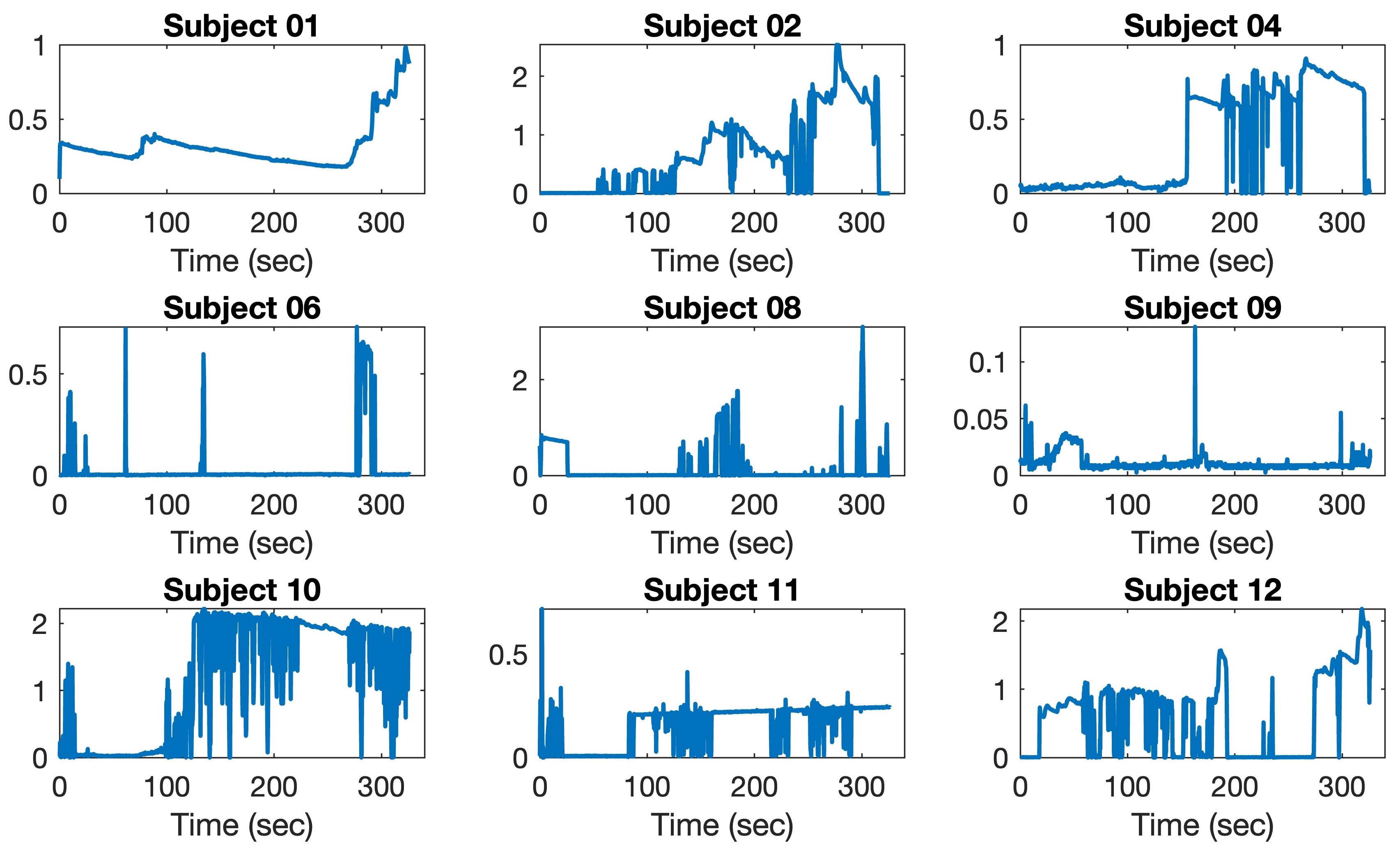}
\caption{EDA signals of 9 participants. }
\label{fig:eda3by3}
\end{figure}

We attempt to recover the emotional events by five different methods:
\begin{itemize}
\item GMS-P method: use  \eqref{equ:gmsp} with $C=5, q=0.85$. This results $\OR_{C,q}(\vy_i)\in\R^{266\times28}$ and consequently $\OR_{C,q}(\mY_0)\in\R^{266\times (28*23)}$.

\item CS method: for $i=[23]$, use each $\vy_i$  as input $\vy_0$ in \eqref{equ:lasso}. 
\item CS-P method: use \eqref{equ:csp} with $C=5, q=0.85$.
\item NeuroKit2: for $i=[23]$, use each $\vy_i$  as input in the function \verb"eda_process()".
\item sparsEDA: for $i=[23]$, use each $\vy_i$  as input in the Matlab function \verb"sparsEDA()".\footnote{https://github.com/fhernandogallego/sparsEDA}
\end{itemize}
\begin{figure}[htb]
\centering
\includegraphics[width=\textwidth]{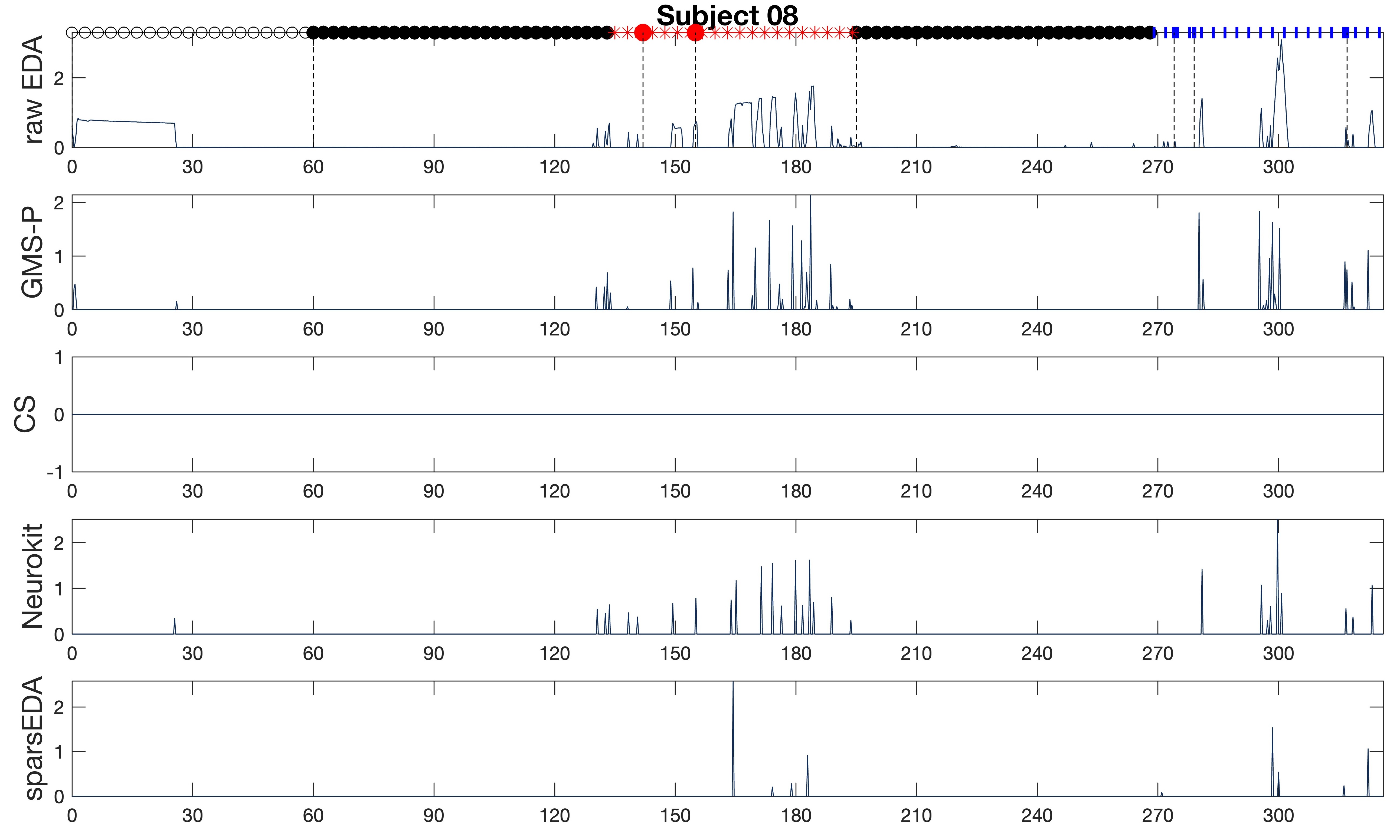}
\caption{Recovery results using all 4 methods}\label{fig:08}
\end{figure}

\begin{figure}[htb]
\centering
\includegraphics[width=\textwidth]{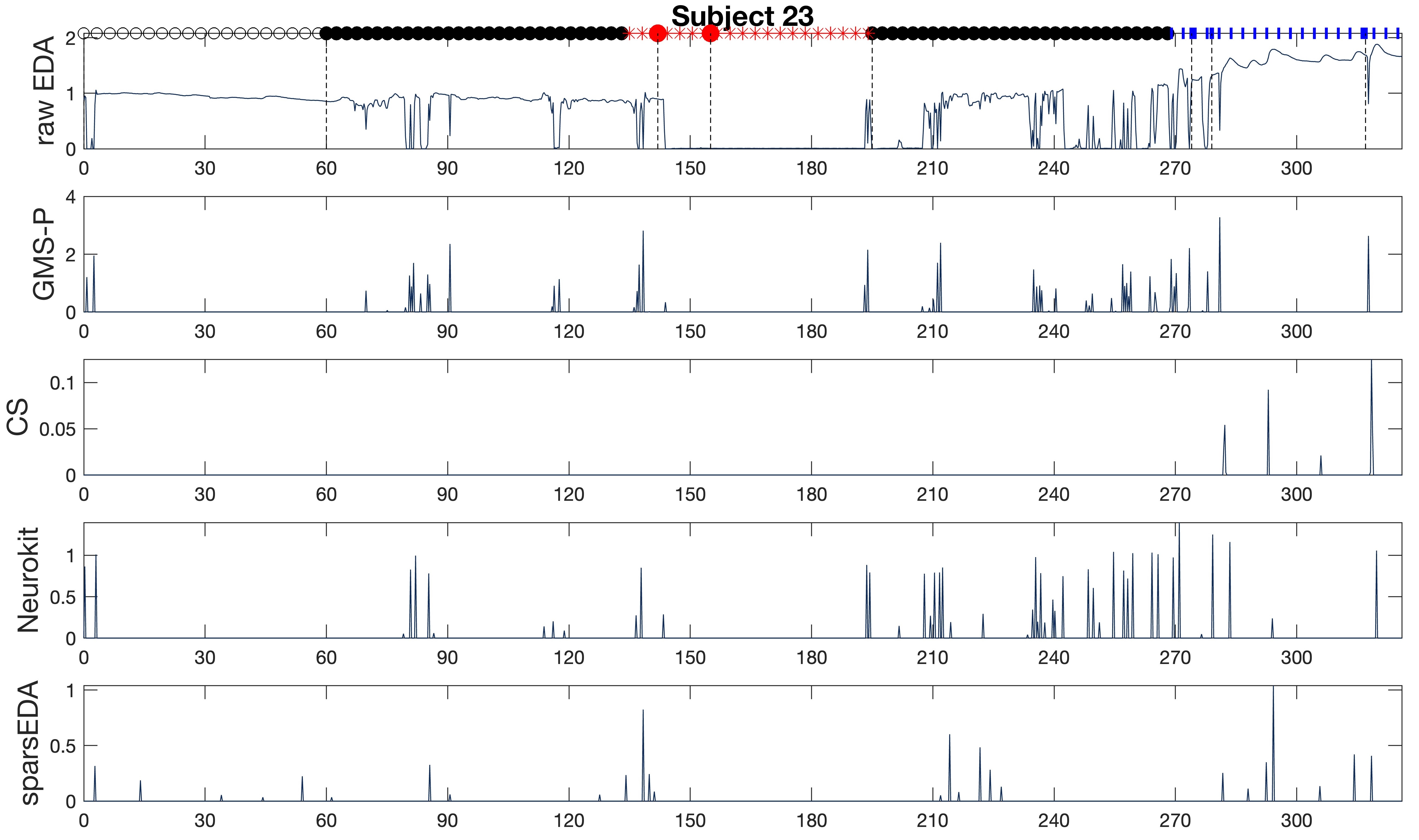}
\caption{Recovery results using all 4 methods}\label{fig:23}
\end{figure}

Figure \ref{fig:08} and Figure \ref{fig:23} display the recovery results of Subject 08 and Subject 23 respectively, using these methods. We observe that:
\begin{itemize}
\item The results from CS and CS-P are almost identical throughout all participants, so we only plot the CS method in Figures \ref{fig:08} and \ref{fig:23}. The amplitudes of SCR obtained are too small in general. Both the CS and CS-P methods are not very successful at identifying stimulus events as the recovered amplitudes of SCR are almost all 0's.\footnote{This was tested with multiple choices of $\lambda$ in \eqref{equ:lasso}.}
\item Our method (GMS-P) has similar results to NeuroKit2 in general, but NeuroKit2 tends to identify more false peaks, as shown in both Figure \ref{fig:08} and Figure \ref{fig:23}.
\item sparsEDA detects fewer peaks in general.
\end{itemize}

To compare all these different methods in a more systematic and quantitative way, we define \emph{event match rate (ER)} and \emph{false peak rate (FR)} given an onset time window $t$. For any recovered SCR signal, a peak is identified if its amplitude is at least 2\% of the maximum value of the EDA signal.
\begin{align*}
\text{event match rate}=\frac{\text{number of identified peaks that are within $t$ seconds of any event}}{\text{number of events}},\\
\text{false peak rate}=\frac{\text{number of identified peaks that are not within $t$ seconds of any event}}{\text{number of peaks}}.
\end{align*}
The events are the beginning of video (0 sec), beginning of two distraction tasks (60, 195 sec), positive moments (142, 155 sec), and negative moments (274,   279,   317 sec).
It is apparent that a good recovery method should have high ER and low FR.

For example, in Figure \ref{fig:23}, given $t=2$, for the CS method, there are two peaks within 2 seconds of any event (last two negative moments), so the event match rate is 2/8 = 0.25. The other two peaks are not in a 2 seconds window of any event, so the false peak rate  is 2/4 = 0.5.

We also used ER to define \emph{quality subjects}.
Among 23 subjects whose EDA signals are processed, we only keep subjects whose ER are greater than or equal to 50\% for at least one of the 5 methods when $t=2$. We call these quality subjects. We ended up with 9 quality subjects.

Table \ref{tab:rates} lists the two rates averaged over all 9 quality subjects for all these methods.
For the event match rate, GMS-P performs the best for 4 out of 5 different window size $t$. The CS method performs poorly for ER so our false peak rate only included GMS-P, NeuroKit2, and SparsEDA, among which GMS-P has the lowest FR.

\begin{table}[htbp]
\caption{ER and FR averaged over all quality subjects for various methods}\label{tab:rates}
\centering
\begin{tabular}{c| c c c c c}
window $t$&0.5 sec &1 sec &1.5 sec & 2 sec &2.5 sec\\
\hline\hline
\multicolumn{6}{c}{(ER) event match rate}\\
\hline
GMS-P & 0.2917  &  \textbf{0.5139}  &  \textbf{0.5278}  &  \textbf{0.6111}   & \textbf{0.6528}\\
 NeuroKit2&   \textbf{0.3472}   & 0.4306  &  \textbf{0.5278}  &  0.5556  &  0.6389\\
 SparsEDA  & 0.1250  &  0.2083  &  0.2500  &  0.2778  &  0.3056\\
  CS&  0.0417  &  0.0417 &   0.0556  &  0.0694   & 0.0694\\
  \hline
\multicolumn{6}{c}{(FR) false peak rate}\\
GMS-P & \textbf{0.9249}  &  \textbf{0.8750}  &  \textbf{0.8464}  &  \textbf{0.7997}  &  \textbf{0.7713}\\
NeuroKit2 &    0.9517  &  0.9207  &  0.8815  &  0.8671  &  0.8423\\
 SparsEDA &   0.9274  &  0.8781  &  0.8610  &  0.8219  &  0.8032
\end{tabular}
\end{table}

\section{Discussion and Conclusion}\label{sec:dis}
We propose a new method gmsEDA that is based on generalized matrix separation~\cite{CW25, CD25} for decomposing EDA signals into its phasic (SCR events) and tonic components. Whether we have one single EDA signal or multiple signals, overlapped reshape is recommended for better performance and  computational efficiency. 
Numerical experiments in Section \ref{sec:syn_p} do suggest that gmsEDA shows an advantage when  at least two EDA signals are being processed simultaneously.

Our setup and modeling can handle shifts in baseline  which comes from motion artifacts. We also provide theoretical guarantee for our matrix recovery problem tailored to the EDA decomposition problem.
Our synthetic experiments further demonstrate that gmsEDA is robust with respect to the choice of $\mH$ which may be changing over time or likely subject dependent.

Compared to other popular toolkits, the real data experiments demonstrate superior performance of gmsEDA in terms of higher event detection rate and lower false peak rate.

Our GMS framework is more general than EDA decomposition. In the future, we will explore applications to imaging and other physiological data, including developing generalized matrix separation theory when the sparse matrix is confined to be nonnegative.


%

\appendix
\section{Mathematical Background}\label{sec:math}
Given a matrix $\mA=(a_{ij})$, 
\begin{itemize}
\item $\|\mA\|$ is its spectral norm
\item $\|\mA\|_\infty=\max_{i,j}|a_{ij}|$
\item $\|\mA\|_F=\sqrt{\sum_{i,j}a_{ij}^2}$ is its Frobenius norm.
\end{itemize}
\begin{definition}[Singular Value Decomposition (SVD)]\label{def:svd}
Let $\mA\in\R^{m\times n}$ whose rank is $k$. It can be shown that $\mA$ can be factorized as
\begin{equation}\label{equ:SVD}
\mA=\mU\Sigma \mV^\top,
\end{equation}
where $\mU$ is an $m\times k$  matrix with orthonormal columns, $\mV$ is a $k\times n$ matrix with orthonormal columns, and $\Sigma=\diag(\sigma_1, \sigma_2, \cdots, \sigma_k)$ is a $k\times k$ diagonal matrix with positive diagonals. 
These positive diagonal entries of $\Sigma$ are called the \emph{singular values} of $\mA$, and can be arranged in descending order: $\sigma_1\geq\sigma_2\geq\cdots\geq \sigma_{k}>0$. The decomposition \eqref{equ:SVD} is called the reduced singular value decomposition of $\mA$.
\end{definition}

With Definition \ref{def:svd}, we further define the condition number of $\mA$ to be $\sigma_1/\sigma_k$. For any $q\in[k]$,  the rank-$q$ approximation of $\mA$, denoted by $\mA_q$ is
$$\mA_q:=\mU_q\Sigma_q\mV_q^\top,$$
where $\mU_q$ consists of the first $q$ columns of $\mU$, $\mV_q$ consists of the first $q$ columns of $\mV$, and $\Sigma_q$ is the leading $q\times q$ submatrix of $\Sigma$.

$\mA_q$ is considered the best rank-$q$ approximation of $\mA$ in terms of both spectral norm and Frobenius norm, that is
$$\mA_q=\argmin_{\rank(B)\leq q}\|\mA-\mB\|, \text{ and }\mA_q=\argmin_{\rank(\mB)\leq q}\|\mA-\mB\|_F.$$
A matrix $\mA$ is often considered ``low-rank'' if $\|\mA-\mA_q\|$ is relatively small for $q$ much less than the number of rows of $\mA$ and the number of columns of $\mA$.

\begin{reading}
Lectures 1-5 of \cite{NLA} is a great reference for SVD and related concepts. For matrix separation, interested readers can refer \cite{C11} and \cite{rpca}.
\end{reading}

\section{Theoretical Guarantee}\label{sec:theory}
We first review some related results for recovery guarantee of \eqref{equ:gms} or \eqref{equ:gmsc}. The following definitions are from \cite{CW25}.

Given a matrix $\mS_0\in \R^{p\times n}$ and $0<\delta <1$, we say a matrix $\mG$ of dimension $m\times p$ has the $\mS_0$-$\delta$-\emph{restricted infinity norm property} ($\mS_0$-$\delta$-RINP) if 
\begin{equation}\label{equ:Hrip}
\|(\mI-\mG^\top \mG)\mA\|_\infty\leq\delta\|\mA\|_\infty \text{ for all }\mA\in\Omega(\mS_0),
\end{equation}
where $\Omega(\mS_0)$ is the set of all $p\times n$ matrices whose support is within the support of $\mS_0$.

For a fixed $\mG$, we also define
\begin{equation}\label{equ:muH}
\mu_\mG(\mS):=\max_{\mA\in\Omega(\mS), \|\mA\|_\infty\leq1}\|\mG\mA\|.
\end{equation}
and
\begin{equation}\label{equ:xiH}
\xi_\mG(\mL):=\max_{\mB\in \Tc(\mL), \|\mB\|\leq1}\|\mG^\top \mB\|_\infty,
\end{equation}
where $\Tc(\mL)$ is  the tangent space at matrix $\mL$ with respect to the variety of all matrices with rank less than or
equal to rank$(\mL)$.

\begin{theorem}[{\cite[Theorem 2.7]{CW25}}]\label{thm:main}\label{thm:main}
Given $\mM_0=\mG\mS_0+\mL_0$ where  $\mG$ satisfies  \eqref{equ:Hrip} with $0\leq\delta<1/3$. If
\begin{equation}\label{equ:incoherence}
\mu_\mG(\mS_0)\xi_\mG(\mL_0)<\frac{1-3\delta}{6},
\end{equation}
then there exists $\lambda>0$ such that for any optimizer $(\hat \mL, \hat \mS)$ of 
\begin{equation}\label{equ:p}
(\hat \mL,  \hat \mS)=\argmin_{\mL, \mS}\{\|\mL\|_*+\lambda \|\mS\|_1\} \quad \text{subject to }\mM_0=\mL+\mG\mX,
\end{equation}
we must have $\hat \mS=\mS_0, \hat \mL=\mL_0$.  
\end{theorem}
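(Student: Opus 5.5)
The plan is the standard dual-certificate argument for convex matrix separation (as in robust PCA), adapted to the measurement operator $\mG$. Throughout, let $\mL_0=\mU\Sigma\mV^\top$ be a reduced SVD, $\Tc=\Tc(\mL_0)$ with orthogonal projector $P_\Tc$, and let $\Omega=\Omega(\mS_0)$ with projector $P_\Omega$.

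\textbf{Step 1: sufficient optimality condition.} I would first prove that $(\mL_0,\mS_0)$ is the unique optimizer if two things hold.
\begin{itemize}
\item $\Tc\cap\mG\Omega=\{0\}$. That is, $\mG\mA\in\Tc$ with $\mA\in\Omega$ forces $\mA=0$.
\item There is a dual matrix $\mW$ satisfying
\begin{itemize}
\item $P_\Tc\mW=\mU\mV^\top$ and $\|P_{\Tc^\perp}\mW\|<1$;
\item $P_\Omega(\mG^\top\mW)=\lambda\,\mathrm{sgn}(\mS_0)$ and $\|P_{\Omega^\perp}(\mG^\top\mW)\|_\infty<\lambda$.
\end{itemize}
\end{itemize}
Feasible perturbations have the form $(\mL_0+\mG\mA,\ \mS_0-\mA)$. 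Pick subgradients $\mU\mV^\top+\mW_\perp$ of $\|\cdot\|_*$ and $\lambda\,\mathrm{sgn}(\mS_0)+\mF$ of $\lambda\|\cdot\|_1$, chosen to align with $P_{\Tc^\perp}(\mG\mA)$ and $P_{\Omega^\perp}\mA$. This gives the lower bound
\[
\|\mL_0+\mG\mA\|_*+\lambda\|\mS_0-\mA\|_1 \;\ge\; \text{objective at }(\mL_0,\mS_0)+(1-\|P_{\Tc^\perp}\mW\|)\,\|P_{\Tc^\perp}\mG\mA\|_*+(\lambda-\|P_{\Omega^\perp}\mG^\top\mW\|_\infty)\,\|P_{\Omega^\perp}\mA\|_1 .
\]
Strict positivity then follows unless $P_{\Omega^\perp}\mA=0$ and $\mG\mA\in\Tc$, and the transversality condition rules out that case.

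\textbf{Step 2: transversality.} For $\mA\in\Omega$ with $\mG\mA\in\Tc$, RINP gives $\|\mA\|_\infty\le \|\mG^\top\mG\mA\|_\infty+\delta\|\mA\|_\infty$. Since $\mG\mA\in\Tc$, we can write $\mG\mA=\|\mG\mA\|\,\mB$ with $\mB\in\Tc$ and $\|\mB\|\le1$. Hence
\[
\|\mG^\top\mG\mA\|_\infty\le \xi_\mG(\mL_0)\,\|\mG\mA\|\le \xi_\mG\,\mu_\mG\,\|\mA\|_\infty .
\]
Combining, $(1-\delta-\mu\xi)\|\mA\|_\infty\le0$, so $\mA=0$ because $\mu\xi<1/6$ and $\delta<1/3$.

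\textbf{Step 3: constructing $\mW$.} This is the main obstacle. I would write $\mW=\mU\mV^\top+\mW_S$ and choose $\mW_S=\lambda(\mG\mA)$ with $\mA\in\Omega$, projected off $\Tc$ as needed. The requirement $P_\Omega\mG^\top\mW=\lambda\,\mathrm{sgn}(\mS_0)$ becomes a linear equation on $\Omega$. Its operator is $P_\Omega\mG^\top\mG P_\Omega$ plus a correction of size $\mu\xi$ coming from $P_\Tc$. RINP says $P_\Omega\mG^\top\mG P_\Omega$ is within $\delta$ of the identity in $\infty\to\infty$ norm, so a Neumann series gives a solution with
\[
\|\mA\|_\infty\le \frac{1+\xi/\lambda}{1-\delta-\mu\xi}.
\]
The $\mU\mV^\top$ part contributes $\|P_\Omega\mG^\top\mU\mV^\top\|_\infty\le\xi$ to the right-hand side.

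I would then bound the two off-support quantities.
\begin{itemize}
\item The spectral part: $\|P_{\Tc^\perp}\mW\|\le\lambda\mu\|\mA\|_\infty$.
\item The entrywise part: $\|P_{\Omega^\perp}\mG^\top\mW\|_\infty\le \xi+\lambda(\delta+\mu\xi)\|\mA\|_\infty$, using that $P_{\Omega^\perp}(\mI-\mG^\top\mG)$ restricted to $\Omega$ is controlled by $\delta$.
\end{itemize}
Requiring the first bound to be $<1$ and the second to be $<\lambda$ gives two inequalities in $\lambda$. The existence of some $\lambda$ in the admissible window should reduce exactly to $\mu\xi<(1-3\delta)/6$, e.g.\ by taking $\lambda$ of order $\sqrt{\xi/\mu}$. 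The delicate part is bookkeeping the constants so the window is nonempty, which is where the factor $1-3\delta$ and the $6$ should emerge.
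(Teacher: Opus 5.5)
The paper gives no proof of this theorem; it quotes it from \cite{CW25}, so your proposal has to be judged on its own. Steps 1 and 2 are correct. The subgradient lower bound is right. Step 2 in fact proves the stronger fact that Step 1 needs: $\mA\in\Omega$ and $\mG\mA\in\Tc$ imply $\mA=0$, which includes injectivity of $\mG$ on $\Omega$. The literal condition $\Tc\cap\mG\Omega=\{0\}$ would only give $\mG\mA=0$.

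The gap is Step 3, which carries the entire content of the theorem, and there the one quantitative claim you make is wrong. Take $\mW=\mU\mV^\top+P_{\Tc^\perp}\mG\mA$ with $\mA\in\Omega$. The equation on $\Omega$ is then $\mA-P_\Omega(\mI-\mG^\top\mG)\mA-P_\Omega\mG^\top P_\Tc\mG\mA=\lambda\,\mathrm{sgn}(\mS_0)-P_\Omega\mG^\top\mU\mV^\top$. The $P_\Tc$ term can only be bounded by $\xi\|P_\Tc\mG\mA\|\le 2\mu\xi\|\mA\|_\infty$. The reason is that $P_\Tc$ is not a contraction in spectral norm: $P_\Tc\mathbf N=\mathbf N-P_{U^\perp}\mathbf N P_{V^\perp}$ only gives $\|P_\Tc\mathbf N\|\le 2\|\mathbf N\|$. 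So the $\mu\xi$ in your Neumann-series bound and in your off-support entrywise bound should both be $2\mu\xi$. You also never carry out the window computation. Your own (invalid) constants would give the condition $\mu\xi<(1-2\delta)/4$, so nothing supports the assertion that it ``reduces exactly'' to $(1-3\delta)/6$.

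To close the gap, finish the bookkeeping with the correct constants. First, $\|\mA\|_\infty\le(\lambda+\xi)/(1-\delta-2\mu\xi)$. For the off-support term use RINP, since $P_{\Omega^\perp}\mG^\top\mG\mA=-P_{\Omega^\perp}(\mI-\mG^\top\mG)\mA$. This gives
\[
\|P_{\Tc^\perp}\mW\|\le\frac{\mu(\lambda+\xi)}{1-\delta-2\mu\xi}<1\quad\text{if}\quad\lambda<\frac{1-\delta-3\mu\xi}{\mu},
\]
\[
\|P_{\Omega^\perp}\mG^\top\mW\|_\infty\le\xi+\frac{(\delta+2\mu\xi)(\lambda+\xi)}{1-\delta-2\mu\xi}<\lambda\quad\text{if}\quad\lambda>\frac{\xi}{1-2\delta-4\mu\xi}.
\]
Write $t=\mu\xi$. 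The window is nonempty iff $12t^2-(8-10\delta)t+(1-2\delta)(1-\delta)>0$, whose roots are $(1-2\delta)/6$ and $(1-\delta)/2$. Hence $t<(1-2\delta)/6$ suffices, and the hypothesis $t<(1-3\delta)/6$ implies this. At $\delta=0$ you recover the window $\bigl(\xi/(1-4\mu\xi),\,(1-3\mu\xi)/\mu\bigr)$ of \cite{C11}.

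So your dual-certificate route does prove the theorem, even with a slightly better constant. However, the factor $1-3\delta$ does not fall out of it, and without this calculation Step 3 is a plan rather than a proof.
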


\begin{lemma}\label{lem}
Problem \eqref{equ:gmsc} is equivalent to 
\begin{equation}\label{equ:p2_equiv}
( \tilde{\mL}, \tilde{\mX}) = \underset{\mL, \mX}{\argmin} \;\; \lambda \|\mX\|_1 + \|\mL\|_*, \quad\text{subject to } \Sigma_\mH^{-1} \mU_\mH^\top \mY_0 = \mV_\mH^\top \mX + \mL
\end{equation}
with the correspondence $\hat{\mX}_c= \tilde{\mX}, \hat\mW_c=\mU_\mH\tilde\mL$ (or $\tilde{\mL}=\mU_\mH^\top\hat\mW_c$).
\end{lemma}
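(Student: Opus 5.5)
The plan is to build an explicit bijection between the feasible sets of \eqref{equ:gmsc} and \eqref{equ:p2_equiv} that preserves the objective value. Write $\mH=\mU_\mH\Sigma_\mH\mV_\mH^\top$ with $\mU_\mH\in\R^{n\times k}$ having orthonormal columns, so $\mU_\mH^\top\mU_\mH=\mI_k$. Then $\mC=\mU_\mH\Sigma_\mH^{-1}\mU_\mH^\top$ and $\mC\mH=\mU_\mH\mV_\mH^\top$, so the constraint of \eqref{equ:gmsc} reads
\[
\mU_\mH\Sigma_\mH^{-1}\mU_\mH^\top\mY_0=\mW+\mU_\mH\mV_\mH^\top\mX .
\]

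First, any feasible $(\mW,\mX)$ of \eqref{equ:gmsc} has $\mW$ in the column space of $\mU_\mH$. Indeed, $\mW=\mU_\mH(\Sigma_\mH^{-1}\mU_\mH^\top\mY_0-\mV_\mH^\top\mX)$, so $\mW=\mU_\mH\mL$ with $\mL:=\mU_\mH^\top\mW$. Applying $\mU_\mH^\top$ to the constraint and using $\mU_\mH^\top\mU_\mH=\mI$ shows that $(\mL,\mX)$ satisfies the constraint of \eqref{equ:p2_equiv}. Conversely, if $(\mL,\mX)$ is feasible for \eqref{equ:p2_equiv}, multiplying its constraint on the left by $\mU_\mH$ shows that $(\mU_\mH\mL,\mX)$ is feasible for \eqref{equ:gmsc}. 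The maps $\mW\mapsto\mU_\mH^\top\mW$ and $\mL\mapsto\mU_\mH\mL$ are mutually inverse on these feasible sets. In one direction, $\mU_\mH^\top\mU_\mH\mL=\mL$. In the other, $\mU_\mH\mU_\mH^\top\mW=\mW$ because $\mW$ lies in the range of $\mU_\mH$; this range containment is the one point that needs care.

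Second, the objective is preserved: $\|\mU_\mH\mL\|_*=\|\mL\|_*$. If $\mL=\mP\mathbf{S}\mathbf{Q}^\top$ is an SVD, then $\mU_\mH\mP$ still has orthonormal columns, so $(\mU_\mH\mP)\mathbf{S}\mathbf{Q}^\top$ is an SVD of $\mU_\mH\mL$ with the same singular values. Equivalently, $(\mU_\mH\mL)^\top(\mU_\mH\mL)=\mL^\top\mL$. The $\lambda\|\mX\|_1$ term involves the same $\mX$ in both problems.

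Combining these, the bijection matches feasible points with equal objective values, so minimizers correspond exactly with $\hat\mX_c=\tilde\mX$ and $\hat\mW_c=\mU_\mH\tilde\mL$ (equivalently $\tilde\mL=\mU_\mH^\top\hat\mW_c$). The main obstacle is just the range argument in the first step; everything else is routine.
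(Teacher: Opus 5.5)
Your proposal is correct and follows essentially the same route as the paper: multiply the constraint by $\mU_\mH^\top$ (resp.\ $\mU_\mH$) to pass between feasible points, use that $\mW$ lies in the range of $\mU_\mH$, and use the invariance $\|\mU_\mH\mL\|_*=\|\mL\|_*$. Packaging this as an objective-preserving bijection of feasible sets handles both directions at once, so it covers the direction the paper only calls ``similar.'' You also derive the range containment directly from the constraint, where the paper simply asserts it.
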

\begin{proof}
Given $(\hat{\mW}_c, \hat{\mX}_c)$ from \eqref{equ:gmsc}, 
we first show that $(\mU_\mH^\top\hat\mW_c, \hat{\mX}_c)$ solves \eqref{equ:p2_equiv}.

Multiplying both sides of the constraint equation $\mC \mY_0=\hat\mW_c + \mC\mH\hat\mX_c$ by $\mU_\mH^\top$ yields $\Sigma_\mH^{-1} \mU_\mH^\top \mY_0 = \mV_\mH^\top \hat\mX_c + \mU_\mH^\top\hat\mW_c$, showing that $(\mU_\mH^\top\hat\mW_c, \hat{\mX}_c)$ is feasible in \eqref{equ:p2_equiv}.

For any $(\mL, \mX)$ feasible in \eqref{equ:p2_equiv}, we have $( \mU_\mH\mL, \mX)$  feasible in \eqref{equ:gmsc}, which produces a smaller or equal objective function value than the optimizer $(\hat{\mW}_c, \hat{\mX}_c)$:
\begin{equation}\label{equ:a1}
\lambda \|\mX\|_1 + \|\mU_\mH\mL\|_*\geq \lambda \|\hat\mX_c\|_1 + \|\hat\mW_c\|_*.
\end{equation}

 We have $\|\mU_\mH^\top\hat\mW_c\|_*=\|\mU\mU_\mH^\top\hat\mW_c\|_*=\|\hat\mW_c\|_*$ given $\hat\mW_c$ is in the range of $\mU_\mH$. So
 $$\lambda \|\hat\mX_c\|_1 + \|\mU_\mH^\top\hat\mW_c\|_*=\lambda \|\hat\mX_c\|_1+\|\hat\mW_c\|_*\stackrel{\eqref{equ:a1}}{\leq}\lambda \|\mX\|_1 + \|\mU_\mH\mL\|_*=\lambda \|\mX\|_1 + \|\mL\|_*$$
 showing that $(\mU_\mH^\top\hat\mW_c, \hat{\mX}_c)$ solves \eqref{equ:p2_equiv}. The other direction is similar.
\end{proof}

Let $\mP_{\ker(\mH)}$ denote the orthogonal projection onto the kernel of $\mH$.
\begin{theorem}\label{thm:prec}
Given constraint $\mY_0=\mB_0+\mH\mX_0$ and $\mH=\mU_\mH\mathbf{\Sigma}_\mH\mV_\mH^\top$ be a reduced SVD. If there exists $0\leq\delta<1/3$ such that
\begin{align}
\label{equ:Hrip2}
\|\mP_{\ker(\mH)}\mA\|_\infty\leq\delta\|\mA\|_\infty \text{ for all }\mA\in\Omega(\mX_0),\\
\label{equ:incoherence2}
\mu_{\mV^\top_\mH}(\mX_0)\xi_{\mV^\top_\mH}(\mB_0)<\frac{1-3\delta}{6},
\end{align}
then there exists $\lambda>0$ such that for any optimizer $(\hat \mB_c, \hat \mX_c)$ of \eqref{equ:gmsc}, we must have $\hat \mX_c=\mX_0, \hat \mB_c=\mB_0$.  
\end{theorem}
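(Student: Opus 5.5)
The plan is to reduce \eqref{equ:gmsc} to an instance of Theorem \ref{thm:main} via Lemma \ref{lem}, taking $\mG=\mV_\mH^\top$, $\mS_0=\mX_0$, and $\mM_0=\Sigma_\mH^{-1}\mU_\mH^\top\mY_0$. Multiplying the constraint $\mY_0=\mB_0+\mH\mX_0$ on the left by $\Sigma_\mH^{-1}\mU_\mH^\top$ gives
$$\Sigma_\mH^{-1}\mU_\mH^\top\mY_0=\mV_\mH^\top\mX_0+\mL_0,\qquad \mL_0:=\Sigma_\mH^{-1}\mU_\mH^\top\mB_0 .$$
So $\mM_0=\mG\mX_0+\mL_0$ has exactly the form required in Theorem \ref{thm:main}, and its program \eqref{equ:p} coincides with \eqref{equ:p2_equiv}.

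Next I will check the two hypotheses of Theorem \ref{thm:main}. For RINP, note that $\mV_\mH$ has orthonormal columns, so $\mG^\top\mG=\mV_\mH\mV_\mH^\top$ is the orthogonal projection onto $\mathrm{range}(\mH^\top)=\ker(\mH)^\perp$. Hence $\mI-\mG^\top\mG=\mP_{\ker(\mH)}$, and \eqref{equ:Hrip2} is literally \eqref{equ:Hrip} with $\mS_0=\mX_0$.

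For incoherence, the theorem requires $\mu_\mG(\mX_0)\xi_\mG(\mL_0)<\frac{1-3\delta}{6}$, whereas \eqref{equ:incoherence2} is stated with $\mB_0$ in place of $\mL_0$. This mismatch is the main obstacle. I would try to argue that the relevant tangent space $\Tc(\mL_0)$ relates to $\Tc(\mB_0)$, but that does not seem to hold in general, since left multiplication by $\Sigma_\mH^{-1}\mU_\mH^\top$ changes the column space and even the ambient dimension.

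So I would first check the natural special case in which $\mH$ is square and invertible, as it is for the lower-triangular $\mH$ in \eqref{equ:H} with $\vh_1\ne 0$. In that case $\mU_\mH$ is orthogonal, the row space of $\mL_0$ equals that of $\mB_0$, and $\Tc(\mL_0)=\{\mA\mY+\mZ\mR^\top\}$-type sets correspond. Even then $\xi$ is not invariant under the non-orthogonal factor $\Sigma_\mH^{-1}$.

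The most plausible intended reading is therefore that $\mB_0$ in \eqref{equ:incoherence2} stands for the low-rank component of the transformed problem, i.e.\ $\mL_0$, with $\mU_\mH^\top\mC\mB_0=\mL_0$. I would state this identification explicitly and apply Theorem \ref{thm:main} on that basis. This yields some $\lambda>0$ for which every optimizer $(\tilde\mL,\tilde\mX)$ of \eqref{equ:p2_equiv} satisfies $\tilde\mX=\mX_0$ and $\tilde\mL=\mL_0$.

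Finally I will transfer the conclusion back. By Lemma \ref{lem}, any optimizer $(\hat\mW_c,\hat\mX_c)$ of \eqref{equ:gmsc} corresponds to an optimizer $(\mU_\mH^\top\hat\mW_c,\hat\mX_c)$ of \eqref{equ:p2_equiv}. Therefore $\hat\mX_c=\mX_0$, and then
$$\hat\mB_c=\mY_0-\mH\hat\mX_c=\mY_0-\mH\mX_0=\mB_0$$
by the defining step of \eqref{equ:gmsc}. Note that recovering $\mB_0$ does not require $\hat\mW_c$ at all, only $\hat\mX_c$.
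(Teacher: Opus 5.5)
Your proposal is the paper's proof step for step. You pass to \eqref{equ:p2_equiv} via Lemma~\ref{lem}, get the RINP from $\mI-\mV_\mH\mV_\mH^\top=\mP_{\ker(\mH)}$, apply Theorem~\ref{thm:main} with $\mG=\mV_\mH^\top$ and low-rank part $\Sigma_\mH^{-1}\mU_\mH^\top\mB_0$, and conclude $\hat\mB_c=\mY_0-\mH\mX_0=\mB_0$.

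The mismatch you flag is genuine, and the paper's proof makes exactly the same identification without comment. Indeed, whenever $\mH$ is rank-deficient, $\xi_{\mV_\mH^\top}(\mB_0)$ is not even dimensionally defined, because $\mV_\mH\mB$ requires $\mB$ to have $\rank(\mH)$ rows. This is the case for the paper's filter, where $\vh_1=f(0)=0$ gives $\rank(\mH)=n-1$, so your invertible special case never arises. Reading \eqref{equ:incoherence2} with $\Sigma_\mH^{-1}\mU_\mH^\top\mB_0$ in place of $\mB_0$ is therefore the only sensible interpretation.
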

\begin{proof}
Given $(\hat{\mW}_c, \hat{\mX}_c)$ from \eqref{equ:gmsc}, we have $(\mU_\mH^\top\hat\mW_c, \hat{\mX}_c)$ solves \eqref{equ:p2_equiv}.

We see that $\mV^\top_\mH$ satisfies $\mX_0$-$\delta$-RINP since $\mI-\mG^\top \mG=\mI-\mV_\mH\mV^\top_\mH=\mP_{\ker(\mH)}$. With the constraint $\Sigma_\mH^{-1} \mU_\mH^\top \mY_0 = \mV_\mH^\top \mX_0 + \Sigma_\mH^{-1} \mU_\mH^\top\mB_0$,  we can apply Theorem \ref{thm:main} to the problem \eqref{equ:p2_equiv} where $\mG$ is $\mV^\top_\mH$. We conclude that $\hat\mX_c=\mX_0, \mU_\mH^\top\hat\mW_c=\Sigma_\mH^{-1} \mU_\mH^\top\mB_0$.

Finally, $\hat \mB_c=\mY_0 - \mH \hat{\mX}_c=\mY_0-\mH\mX_0=\mB_0$.
\end{proof}

We analyze \eqref{equ:Hrip2}  with our filter $\mH$ as defined in \eqref{equ:H}. $\mH$ is lower triangular with the diagonals being $\vh_1=f(0)=0$. It is easy to check that $\rank(\mH)=n-1$ and $\ker(\mH)=\text{span}\{(0, 0, \cdots, 1)\}$. \eqref{equ:Hrip2} can be satisfied with $\delta=0$ if the last row of $\mX_0$ are all 0. In terms of EDA signal, this translates to no stimulus event occurs at the very last entry. This is easy to achieve especially with our overlapped reshaping technique: if there is positive entry (stimulus event) at the last entry of a segment, it will resurface in the next segment (subsequent column) in the middle position and hence detected.

For \eqref{equ:incoherence2}, the analysis can follow from \cite[Sec 3.1]{CW25} and \cite[Prop 3, Prop 4]{C11} especially given that $\mV_\mH$ has codimension 1. However, the simulated results often perform better than what \eqref{equ:incoherence2} allows in terms of $\rank(\mB_0)$ and sparsity of $\mX_0$.

\section{Details of the Real Data Experiment}\label{sec:exp_detail}

\subsection{Participants}
Thirty undergraduate students were recruited from a mid-sized southern university and completed the study in a laboratory setting in exchange for course credit. 

\subsection{Design}
The study employed a within-subjects repeated-measures mood induction design in which participants completed neutral, positive, and negative mood conditions. Mood condition served as the within-subjects independent variable, and affective responses were assessed using self-report measures (PANAS) and physiological indices collected via the Empatica E4 wristband. Conditions were presented in a fixed order (neutral, positive, negative) to prevent carryover effects from the negative affect manipulation.

\subsection{Equipment}
Physiological data was collected using the Empatica E4 wristband (Empatica Inc., Cambridge, MA), a wearable device designed for physiological data collection during the study. The E4 records EDA via sensors in contact with the participant’s wrist, allowing for assessment of sympathetic nervous system activity across affect manipulations. Prior research has demonstrated the utility of EDA measures obtained from the E4 in distinguishing periods of rest from periods of increased physiological arousal~\cite{schuurmans2020validity}, as well as sensitivity to changes in positive and negative affective states~\cite{borrego2019reliability}.


\subsection{Mood Induction Stimuli}
The mood induction stimuli were categorized into neutral, positive, and negative conditions. Neutral stimuli consisted of two clips with a combined duration of 58 seconds depicting passive, everyday situations. 
Clips were presented in a fixed order (neutral, positive, negative) to minimize affective carryover between conditions. The negative condition was presented last to reduce potential contamination of physiological responses during the neutral and positive conditions. 
See Table~\ref{tab:vid} for the video timestamps.
\begin{table}[htb]
\caption{Video Breakdown}\label{tab:vid}
\begin{tabular}{ll}
0:00 – 0:28:  &Neutral video, scene of people walking on a busy city street\\
0:29 – 0:59:  &Neutral video, scene of two people working together on a laptop\\
1:00 – 2:14:  &Black screen with text ``Please complete distractor task now''\\
2:15 – 2:42:  &Positive video, baby attempting but failing to drink water from a hose\\
	&2:22: first laugh moment\\
	&2:35: Baby smiles\\
2:43 – 3:14:  &Positive video, cat staring at camera, wiggling tongue\\
	&Video is consistent throughout (no unique moments)\\
3:15 – 4:28:  &Black screen with text ``Please complete distractor task now''\\
4:29 – 4:57:  &Negative video, skateboarder falling and breaking arm\\
	&4:34:  Moment of fall (doesn’t look overly disturbing)\\
	&4:39:  Broken arm clearly shown (very disturbing) \\
4:58 – 5:25:  &Negative video, animal trainer has arm chomped by alligator\\
	&5:17:  Alligator chomps on arm and begins to roll with arm in mouth\\
	\end{tabular}
\end{table}

\subsection{Procedure}
Participants completed a 1-hour laboratory session that began with presession procedures, including review of the informed consent protocol, disclosure of potentially distressing material, and completion of baseline self-report measures  \cite{watson1988development, radloff1977ces, spitzer2006brief}. Following baseline assessment, participants were fitted with the Empatica E4 wristband, the device serial number was recorded, and continuous physiological recording was initiated. The neutral condition was presented first and consisted of two neutral clips shown consecutively. Immediately following the clips, participants completed the PANAS and then engaged in a 1-minute distractor task involving simple arithmetic problems. The positive condition followed the same sequence: two positive clips presented consecutively, PANAS administration, and a 1-minute distractor task. The negative condition was then presented and consisted of two negative clips shown consecutively, followed by PANAS administration. No distractor task was included after the negative condition. Immediately thereafter, participants viewed an additional positive clip to facilitate return to a neutral-to-positive affective state. The E4 device was then removed, and participants completed a post-film questionnaire assessing prior exposure to the clips. Participants were subsequently debriefed, and physiological and survey data were uploaded for analysis.

\section*{Acknowledgments}
The authors thank OpiAID\footnote{https://opiaid.ai/} for providing Empatica E4 for data collection. Additionally, we would like to thank the student research assistants that helped run experimental sessions, Peyton Farmer-Twiddy and Daphne Kilbourne.  


\section*{Access to Code}
We created a github repository \url{https://github.com/xuemeic/gmsEDA} to make the real data available, as well as the  code for our gmsEDA method.
\bibliographystyle{siamplain}
\bibliography{ref_eda}

\end{document}